\newtheorem{theorem}{Theorem}
\newtheorem{lemma}{Lemma}
\newtheorem{corollary}{Corollary}[section]
\numberwithin{equation}{section} \numberwithin{theorem}{section}
\numberwithin{lemma}{section} \numberwithin{remark}{section}
\numberwithin{example}{section}
\numberwithin{table}{section}
\numberwithin{figure}{section}
\newcommand{\bY}{\boldsymbol{Y}}
\newcommand{\bX}{\boldsymbol{X}}
\newcommand{\bR}{\boldsymbol{R}}
\newcommand{\bZ}{\boldsymbol{Z}}
\newcommand{\bE}{\boldsymbol{E}}
\newcommand{\be}{\boldsymbol{e}}
\newcommand{\bU}{\boldsymbol{U}}
\newcommand{\bbeta}{\boldsymbol{\beta}}
\newcommand{\bdelta}{\boldsymbol{\delta}}
\newcommand{\bDelta}{\boldsymbol{\Delta}}
\newcommand{\bu}{\boldsymbol{u}}
\newcommand{\hv}{\hat{v}}
\newcommand{\cov}{\mathrm{Cov}}
\newcommand{\tb}{\textcolor{blue}}
\newcommand{\convP}{\stackrel{P}{\rightarrow}}
\begin{document}

\title{
On the prediction of stationary functional time series
%\footnote{This research was partially supported by NSF grants DMS 0905400, DMS 1209226 and DMS 1305858, Communaut\'e fran\c{c}aise de Belgique---Actions de Recherche Concert\'ees (2010--2015) and the IAP research network grant nr.\ P7/06 of the Belgian government (Belgian Science Policy).}
}

\author{
Alexander Aue\footnote{Corresponding author}%\footnote{Department of Statistics, University of California Davis, One Shields Avenue, Davis, CA 95616, USA, email: \tt{aaue@ucdavis.edu}}
\and Diogo Dubart Norinho%\footnote{Department of Computer Science, University College London, London WC1E 6BT, UK, email:
%\tt{ucabdub@ucl.ac.uk} }
\and Siegfried H\"ormann%\footnote{D\'epartment de Math\'ematique, Universit\'e Libre de Bruxelles, CP 215, Boulevard du Triomphe, B-1050 Bruxelles, Belgium, email: \tt{shormann@ulb.ac.be}}
}

\date{}
\maketitle

\begin{abstract}
\setlength{\baselineskip}{2em}
This paper addresses the prediction of stationary functional time series. Existing contributions to this problem have largely focused on the special case of first-order functional autoregressive processes because of their technical tractability and the current lack of advanced functional time series methodology. It is shown here how standard multivariate prediction techniques can be utilized in this context. The connection between functional and multivariate predictions is made precise for the important case of vector and functional autoregressions. The proposed method is easy to implement, making use of existing statistical software packages, and may therefore be attractive to a broader, possibly non-academic, audience. Its practical applicability is enhanced through the introduction of a novel functional final prediction error model selection criterion that allows for an automatic determination of the lag structure and the dimensionality of the  model. The usefulness of the proposed methodology is demonstrated in a simulation study and an application to environmental data, namely the prediction of daily pollution curves describing the concentration of particulate matter in ambient air. It is found that the proposed prediction method often significantly outperforms existing methods.\medskip \\ 
\noindent {\bf Keywords:} Dimension reduction; Final prediction error, Forecasting, Functional autoregressions; Functional principal components, Functional time series; Particulate matter, Vector autoregressions\\
\noindent {\bf MSC 2010:} Primary 62M10, 62M20; Secondary 62P12, 60G25
\end{abstract}

\setlength{\baselineskip}{2em}

\section{Introduction}
\label{s:1}

Functional data are often collected in sequential form. The common situation is a continuous-time record that can be separated into natural consecutive time intervals, such as days, for which a reasonably similar behavior is expected. Typical examples include the daily price and return curves of financial transactions data and the daily patterns of geophysical, meteorological and environmental data.  The resulting functions may be described by a time series $(Y_k\colon k\in\mathbb{Z})$, each term in the sequence being a (random) function $Y_k(t)$ defined for $t$ taking values in some interval $[a,b]$. Here, $\mathbb{Z}$ denotes the set of integers. The object $(Y_k\colon k\in\mathbb{Z})$ will be referred to as a functional time series (see \cite{hk12} for a recent survey on time series aspects, and \cite{fv10} and \cite{rs05} for general introductions to functional data analysis). Interest for this paper is in the functional modeling of concentration of particulate matter with an aerodynamic diameter of less than $10\mu m$ in ambient air, measured half-hourly in Graz, Austria. It is widely accepted that exposure to high concentrations can cause respiratory and related health problems. Local policy makers therefore monitor these pollutants closely. The prediction of concentration levels is then a particularly important tool for judging whether measures, such as partial traffic regulation, have to be implemented in order to meet standards set by the European Union.

Providing reliable predictions for future realizations is in fact one of the most important goals of any time series analysis. In the univariate and multivariate framework, this is often achieved by setting up general prediction equations that can be solved recursively by methods such as the Durbin-Levinson and innovations algorithms (see, for example, \cite{bd91,ss11}). Prediction equations may be derived explicitly also for general stationary functional time series (see Section 1.6 of the monograph \cite{b00}) but they seem difficult to solve and implement. As a consequence, much of the research in the area has focused on the first-order functional autoregressive model, shortly FAR(1). \cite{b00} has derived one-step ahead predictors that are based on a functional form of the Yule-Walker equations. \cite{bcs00} have proposed nonparametric kernel predictors and illustrated their methodology by forecasting climatological cycles caused by the El Ni\~{n}o phenomenon. While this paper, and also \cite{bc96}, have adapted classical spline smoothing techniques, \cite{as03}, see also \cite{aps06,aps09}, have studied FAR(1) curve prediction based on linear wavelet methods. \cite{ko08} have introduced the predictive factor method, which seeks to replace functional principal components with directions most relevant for predictions. \cite{dkz12} have evaluated several competing prediction models in a comparative simulation study, finding Bosq's (2000) method to have the best overall performance. Other contributions to the area are \cite{acv10}, and \cite{av08}.

In spite of its statistical relevance and its mathematical appeal, functional time series modeling
has still some unpleasant limitations for the practitioner. First, to date there are not many ``ready to use'' statistical software packages that can be utilized directly for estimation and prediction purposes. The only available packages that the authors are aware of are the {\tt far} package of \cite{dg10} and the {\tt ftsa} package of \cite{hs:12}, both implemented for the statistical software {\tt R}. The lack of tailor-made procedures often requires manual implementation. This may be challenging and therefore restrict use of the methodology to an academic audience.  Second, the methodology developed for  the FAR(1) case is difficult to generalize. If an FAR(1) approach is infeasible, one can use the multiple testing procedure of \cite{kr13} to determine an appropriate order $p$ for a more general FAR($p$) process. In addition, exogenous predictors can be incorporated using the work of \cite{dg02}. These authors include exogenous covariates of FAR(1) type into a first-order autoregressive framework for functional ozone predictions. For more general cases functional theory and estimation have not yet been developed.

The goal of this paper is then to fill in this gap by promoting a simple alternative prediction algorithm which consists of three basic steps, all of which are easy to implement by means of existing software. First, use functional principal components analysis, FPCA, to transform the functional time series observations $Y_1,\ldots,Y_n$ into a vector time series of FPCA scores $\bY_1,\ldots,\bY_n$ of dimension $d$, where $d$ is small compared to $n$. Second, fit a vector time series to the FPCA scores and obtain the predictor  $\hat\bY_{n+1}$ for $\bY_{n+1}$. Third, utilize the Karhunen-Lo\`eve expansion to re-transform $\hat\bY_{n+1}$ into a curve predictor $\hat Y_{n+1}$. The first and the third step are simple and can be performed, for example, with the {\tt fda} package in {\tt R}. The second step may be tackled with standard multivariate time series methodology. Details are developed in Section~\ref{s:pr}. While the proposed approach is conceptually quite easy, several non-trivial questions need to be raised:
\begin{enumerate}\itemsep-.5ex
\item How does the resulting method differ from existing ones?
\item Why is this method justified from a theoretical standpoint?
\item In order to minimize the prediction error, how can the number of principal components in the dimension reduction for Step~1 be determined and how should model selection be performed in Step~2? Preferably, both choices should be made simultaneously.
\end{enumerate}
These issues will be addressed in Section~\ref{s:ffar}. In particular, a comparison to Bosq's~(2000) classical benchmark FAR($p$) prediction is made. A theoretical bound for the prediction error of the proposed methodology is established, which will imply asymptotic consistency. In Section~\ref{ss:fpe} a novel functional final prediction error criterion is developed that jointly selects the order $p$ and the dimensionality $d$ of the FPC score vectors, thereby allowing for an automatic prediction process. 

Functional principal components have been employed in other approaches to functional prediction, for example in Bosq's~(2000) FAR(1) prediction method and in \cite{aov:1999}. Roughly speaking, these and many other existing approaches have in common that $Y_k$ is regressed onto the lagged observation $Y_{k-1}$ by minimizing $E[\int [Y_k(t)-\Psi(Y_{k-1})(t)]^2dt]$ with respect to a linear operator $\Psi$.  The solution of this problem involves an infinite series representation of $\Psi$ along FPCs. (More details will be given in Section~\ref{ss:ffar}.) In contrast, the proposed approach first uses dimension reduction via FPCA and then fits a model to the reduced data. No a priori knowledge of the functional model is needed and instead of a single estimator, a variety of existing tools for vector processes can be entertained. Further lags or exogenous covariates are also easily included into the prediction algorithm (see Section~\ref{s:covariates}). 

\cite{hu:07} and \cite{hs:09} have suggested a curve prediction approach based on modeling FPC scores by scalar time series. They argue that scores are uncorrelated and that hence individual time series can be fit. Depending on the structure of the data, this can be quick and efficient in some cases but less accurate in other cases. The fact that FPC score vectors have no instantaneous correlation, does not imply that autocovariances at lags greater than zero remain diagonal. Hence univariate modeling may invoke a loss of valuable information hidden in the dependence of the data. This will be demonstrated in Section~\ref{s:sim} as part of a simulation study. This issue can be avoided if one makes use of so-called {\em dynamic functional principal components} recently introduced in \cite{hkh13} and \cite{pt13}. These authors propose a methodology which produces score vectors with diagonal autocovariances via time invariant functional linear filters. Since the involved filters are two-sided (they require past and future observations), it is not clear how this methodology could be used for prediction.

It should be noted that, in this article, the data $Y_k$ are assumed to be given in functional form, since the focus is on working out functional prediction methodology without getting into aspects of data preprocessing, which appears to be rather specific to the particular data at hand and therefore not conducive to a unified treatment. In practice, however, one observes only vectors $Y_k(t_1),\ldots,Y_k(t_L)$, with spacings, $t_\ell-t_{\ell-1}$, and number of intraday sampling points, $L$, potentially varying from day to day. The problem of transforming the vector observations into (smooth) functions has been treated in many articles and will not be detailed here. As an excellent starting point for reading in this direction the reader is referred to Chapters~3--7 of \cite{rs05}. It is expected that the comparative results established in this paper as part of simulations and the application will hold also if the functions are not sampled equidistantly, with the rate of improvement of the proposed method over its competitors being of similar magnitude.

The remainder of the paper contains some possible extensions of the new prediction methodology in Section \ref{s:ex}, a supporting simulation study in Section \ref{s:sim} and an application to the prediction of intraday patterns of particulate matter concentrations in Section~\ref{s:appl}. Section \ref{s:con} concludes and technical proofs are given in Appendix \ref{s:comp}.

\section{Methodology}
\label{s:pr}

In what follows, let $(Y_k\colon k\in\mathbb{Z})$ be an arbitrary stationary functional time series. It is assumed that the observations $Y_k$ are elements of the Hilbert space $H=L^2([0,1])$ equipped with the inner product $\langle x,y\rangle=\int_0^1x(t)y(t)dt$.  Each $Y_k$ is therefore a square integrable function satisfying $\|Y_k\|^2=\int_0^1Y_k^2(t)dt<\infty$. All random functions  are
defined on some common probability space $(\Omega,\mathcal{A},P)$. The notation $Y\in L_H^p=L_H^p(\Omega,\mathcal{A},P)$ is used to indicate that, for some $p>0$, $E[\|Y\|^p]<\infty$. Any $Y\in L_H^1$ possesses then a mean curve $\mu=(E[Y(t)]\colon t\in[0,1])$, and any $Y\in L_H^2$ a covariance operator $C$, defined by $C(x)=E[\langle Y-\mu,x\rangle(Y-\mu)]$. The operator $C$ is a kernel operator given by
\[
C(x)(t)=\int_0^1c(t,s)x(s)ds,\qquad c(t,s)=\cov(Y(t),Y(s)).
\]
As in the multivariate case, $C$ admits the spectral decomposition
\[
C(x)=\sum_{\ell=1}^\infty\lambda_\ell\langle v_\ell,x\rangle v_\ell,
\]
where $(\lambda_\ell\colon\ell\in\mathbb{N})$ are the eigenvalues (in strictly descending order) and $(v_\ell\colon\ell\in\mathbb{N})$ the corresponding normalized eigenfunctions, so that $C(v_\ell)=\lambda_\ell v_\ell$ and $\|v_\ell\|=1$. Here, $\mathbb{N}$ is the set of positive integers. The $(v_\ell\colon\ell\in\mathbb{N})$ form an orthonormal basis of $L^2([0,1])$. Hence $Y$ allows for the Karhunen-Lo\'eve representation $Y=\sum_{\ell=1}^\infty\langle Y,v_\ell\rangle v_\ell$. The coefficients $\langle Y,v_\ell\rangle$ in this expansion are called the FPC scores of $Y$.

Suppose now that we have observed $Y_1,\ldots,Y_n$. In practice $\mu$ as well as $C$ and its spectral decomposition will be unknown and need to be estimated from the sample. We estimate $\mu$ by
\[
\hat\mu_n(t)=\frac{1}{n}\sum_{k=1}^nY_k(t),\qquad t\in[0,1],
\]
and the covariance operator by
\[
\hat{C}_n(x)=\frac{1}{n}\sum_{k=1}^n\langle Y_k-\hat{\mu}_n,x\rangle (Y_k-\hat{\mu}_n).
\]
Under rather general weak dependence assumptions these estimators are $\sqrt{n}$-consistent. One may, for example, use the concept of $L^p$-$m$-approximability introduced in \cite{hk11} to prove that
$E[\|\hat\mu_n-\mu\|^2]=O(n^{-1})$ and  $E[\|\hat{C}_n-C\|^2_\mathcal{L}]=O(n^{-1})$, where the operator norm $\|\cdot\|_\mathcal{L}$ is, for any operator $A$, defined by
\[
\|A\|_\mathcal{L}=\sup_{\|x\|\leq 1}\|A(x)\|.
\]
It is shown in Lemma~\ref{le:A1} of the Appendix that the general results (see Theorems~5 and 6 of \cite{hk12}) apply to the functional autoregressive processes studied in this paper. From $\hat{C}_n$, estimated eigenvalues $\hat\lambda_{1,n},\ldots,\hat\lambda_{d,n}$ and estimated eigenfunctions $\hat{v}_{1,n},\ldots,\hat{v}_{d,n}$ can be computed for an arbitrary fixed, but typically small, $d<n$. These estimators inherit $\sqrt{n}$-consistency from $\hat{C}_n$. See Theorem~7 in \cite{hk12}. For notational convenience, $\hat\lambda_{\ell}$ and $\hat{v}_{\ell}$ will be used in place of $\hat\lambda_{\ell,n}$ and $\hat{v}_{\ell,n}$.

Functional linear prediction equations for general stationary processes have been derived in Section 1.6 of the monograph \cite{b00}. They appear to be impractical for actual data analysis as there do not seem to be either articles discussing applications to real life examples or contributions concerned with further foundational elaboration. As pointed out in the introduction, the notable exception is the FAR(1) process defined by the stochastic recursion
\begin{equation}\label{eq:FAR_mu}
Y_k-\mu=\Psi(Y_{k-1}-\mu)+\varepsilon_k, \qquad k\in\mathbb{Z},
\end{equation}
where $(\varepsilon_k\colon k\in\mathbb{Z})$ are centered, independent and identically distributed innovations in $L_H^2$ and $\Psi:H\to H$ a bounded linear operator satisfying $\|\Psi^{k_0}\|_\mathcal{L}<1$ for some $k_0\geq 1$. The latter condition ensures that the recurrence equations \eqref{eq:FAR_mu} have a strictly stationary and causal solution in $L_H^2$. \cite{b00} has in the FAR(1) case used the prediction equations to devise what is now often referred to as the common predictor. This one-step ahead prediction is based on an estimator $\tilde\Psi_n$ of $\Psi$ and then given by $\tilde Y_{n+1}=\tilde\Psi_nY_{n}$. Details of this method are given in Section \ref{s:ffar}, where it will be used as a benchmark to compare with the novel methodology to be introduced in the following. The new prediction technique avoids estimating operators directly and instead utilizes existing multivariate prediction methods.

The proposed prediction algorithm proceeds in three steps. First, select $d$, the number of principal components to be included in the analysis, for example by ensuring that a certain fraction of the data variation is explained. With the sample eigenfunctions, empirical FPC scores $y_{k,\ell}^e=\langle Y_k,\hat v_\ell\rangle$ can now be computed for each combination of observations $Y_k$, $k=1,\ldots,n$, and sample eigenfunctions $\hat v_\ell$, $\ell=1,\ldots,d$. The superscript $e$ emphasizes that empirical versions are considered. Create from the FPC scores the vectors
\[
\bY_k^e=(y_{k,1}^e,\ldots,y_{k,d}^e)^\prime,
\]
where ${}^\prime$ signifies transposition. By nature of FPCA, the vector $\bY_k^e$ contains most of the information on the curve $Y_k$. In the second step, fix the prediction lag $h$. Then, use multivariate prediction techniques to produce the $h$-step ahead prediction
\[
\hat\bY_{n+h}^e=(\hat y_{n+h,1}^e,\ldots,\hat y_{n+h,d}^e)^\prime
\]
given the vectors $\bY_1^e,\ldots,\bY_n^e$. Standard methods such as the Durbin-Levinson and innovations algorithm can be readily applied, but other options such as exponential smoothing and nonparametric prediction algorithms are available as well. In the third and last step, the multivariate predictions are re-transformed to functional objects. This conversion is achieved by defining the truncated Karhunen-Lo\'eve representation
\begin{equation}\label{fc_kl}
\hat{Y}_{n+h}=\hat{y}_{n+h,1}^e\,\hat{v}_1+\cdots+\hat{y}_{n+h,d}^e\hat{v}_d
\end{equation}
based on the predicted FPC scores $\hat y_{k,\ell}^e$ and the estimated eigenfunctions $\hat v_\ell$. The resulting $\hat Y_{n+h}$ is then used as the $h$-step ahead functional prediction of $Y_{n+h}$. The three prediction steps are summarized in Algorithm~\ref{alg:1}.
\begin{algo}
\vspace{.6cm}
\caption{Functional Prediction}
\label{alg:1}
\begin{enumerate}
\item Fix $d$. For $k=1,\ldots,n$, use the data $Y_1,\ldots,Y_n$ to compute the vectors
\[
\bY_k^e=(y_{k,1}^e,\ldots,y_{k,d}^e)^\prime,
\]
containing the first $d$ empirical FPC scores $y_{k,\ell}^e=\langle Y_k,\hat{v}_\ell\rangle$.
\item Fix $h$. Use $\bY_1^e,\ldots,\bY_n^e$ to determine the $h$-step ahead prediction
\[
\hat\bY_{n+h}^e=(\hat{y}^e_{n+h,1},\ldots,\hat{y}^e_{n+h,d})^\prime
\]
for $\bY_{n+h}^e$ with an appropriate multivariate algorithm.
\item Use the functional object
\[
\hat{Y}_{n+h}=\hat{y}_{n+h,1}^e\,\hat{v}_1+\ldots+\hat{y}_{n+h,d}^e\hat{v}_d
\]
as $h$-step ahead prediction for $Y_{n+h}$.
\end{enumerate}
\end{algo}

Several remarks are in order. The proposed algorithm is conceptually simple and allows for a number of immediate extensions and improvements as it is not bound by an assumed FAR structure or, in fact, any other particular functional time series specification. This is important because there is no well developed theory for functional versions of the the well-known linear ARMA time series models ubiquitous in univariate and multivariate settings. Moreover, if an FAR($p$) structure is indeed imposed on $(Y_k\colon k\in\mathbb{Z})$, then it appears plausible that $\bY_1^e,\ldots,\bY_n^e$ should approximately follow a VAR($p$) model. This statement will be made precise in Appendix \ref{s:comp}.

The FAR(1) model should in practice be employed only if it provides a reasonable approximation to the unknown underlying dynamics. To allow for more flexible predictions, higher-order FAR processes could be studied. The proposed methodology offers an automatic way to select the appropriate order $p$ along with the dimensionality $d$ (see Section~\ref{ss:fpe}). It can, in fact, be applied to any stationary functional time series. For example, by utilizing the multivariate innovations algorithm (see Section 11.4 in \cite{bd91}) in the second step of Algorithm \ref{alg:1}. How this is done in the present prediction setting is briefly outlined in Section \ref{s:ex} below. 

It should be emphasized that the numerical implementation of the new prediction methodology is convenient in {\tt R}. For the first step, FPC score matrix $(\bY_1^e\colon\ldots\colon\bY_n^e)$ and corresponding empirical eigenfunctions can be readily obtained with the {\tt fda} package. For the second step, forecasting for the FPC scores can be done in another routine step using the {\tt vars} package in case VAR models are employed. The obtained quantities can be easily combined for obtaining \eqref{fc_kl}.

\section{Predicting functional autoregressions}
\label{s:ffar}

The FAR(1) model \eqref{eq:FAR_mu} is the most often applied functional time series model. It will be used here as a benchmark to compare the proposed methodology to. Without loss of generality it is assumed that $\mu=E[Y_k]=0$. More generally, the higher-order FAR($p$) model
\begin{equation}\label{e:farp}
Y_k=\Psi_1(Y_{k-1})+\cdots+\Psi_p(Y_{k-p})+\varepsilon_k, \qquad k\in\mathbb{Z},
\end{equation}
is considered, assuming throughout that (i) $(\varepsilon_k\colon k\in\mathbb{Z})$ is an i.i.d.\ sequence in $L_H^2$ with $E[\varepsilon_k]=0$, and (ii) the operators $\Psi_j$ are such that equation~\eqref{e:farp} possesses a unique stationary and causal solution. All the above conditions are summarized as Assumption~{\tt FAR}.

\subsection{The standard first-order predictor}\label{ss:ffar}

In order to obtain Bosq's (2000) predictor, estimation of the autoregressive operator $\Psi$ is briefly discussed. The approach is based on a functional version of the Yule-Walker equations. Let then $(Y_k\colon k\in\mathbb{Z})$ be the solution of \eqref{eq:FAR_mu}. Applying $E[\langle \cdot,x\rangle Y_{k-1}]$ to \eqref{eq:FAR_mu} for any $x\in H$, leads to 
\begin{align*}
E[\langle Y_k,x\rangle Y_{k-1}]&=E[\langle \Psi(Y_{k-1}),x\rangle Y_{k-1}]+
E[\langle \varepsilon_k,x\rangle Y_{k-1}]=E[\langle \Psi(Y_{k-1}),x\rangle Y_{k-1}].
\end{align*}
Let again $C(x)=E[\langle Y_1,x\rangle Y_1]$ be the covariance operator of $Y_1$ and also let $D(x)=E[\langle Y_1,x\rangle Y_{0}]$ be the cross-covariance operator of $Y_0$ and $Y_1$. If $\Psi^\prime$ denotes the adjoint operator of $\Psi$, given by the requirement $\langle \Psi(x),y\rangle=\langle x,\Psi^\prime(y)\rangle$, the operator equation $D(x)=C(\Psi^\prime(x))$ is obtained. This formally gives
$
\Psi(x)=D^\prime C^{-1}(x),
$
where $D^\prime(x)=E[\langle Y_0,x\rangle Y_1]$. The operator $D^\prime$ can be estimated by $\hat{D}^\prime(x)=(n-1)^{-1}\sum_{k=2}^n\langle Y_{k-1},x\rangle Y_k$. A more complicated object is the unbounded operator $C^{-1}$. Using the spectral decomposition of $\hat{C}_n$, it can be estimated by
$
\hat{C}^{-1}_n(x)=\sum_{\ell=1}^d\hat{\lambda}_\ell^{-1}\langle \hat{v}_\ell,x\rangle \hat{v}_\ell
$
for an appropriately chosen $d$. Combining these results with an additional smoothing step, using the approximation
$Y_{k} \approx \sum_{\ell=1}^d \langle Y_{k}, \hat{v}_\ell \rangle \hat{v}_\ell$,
gives the estimator
\begin{equation} \label{e:hatPsi}
\tilde \Psi_n (x)
= \frac{1}{n-1}  \sum_{k=2}^{n}  \sum_{\ell=1}^d \sum_{\ell^\prime=1}^d
{\hat \lambda_{\ell}}^{-1}\langle x,  \hat{v}_\ell \rangle \langle Y_{k-1}, \hat{v}_\ell \rangle
\langle Y_{k}, \hat{v}_{\ell^\prime} \rangle \hat{v}_{\ell^\prime}.
\end{equation}
for $\Psi(x)$. This is the estimator of \cite{b00}. It gives rise to the functional predictor
\begin{equation}\label{fc_standard}
\tilde{Y}_{n+1}=\tilde\Psi_n (Y_n)
\end{equation}
for $Y_{n+1}$. Theorem~8.7 of \cite{b00} provides the strong consistency of $\tilde\Psi$ under certain technical assumptions. A recent result of \cite{hki13} (see their Corollary~2.1) shows that consistent predictions (meaning that  $\|\Psi(Y_{n})-\tilde\Psi(Y_n)\|\convP 0$) can be obtained in the present setting if the innovations $(\varepsilon_k\colon k\in\mathbb{Z})$ are elements of $L_H^4$. For these results to hold, it is naturally required that $d=d_n\to \infty$. The choice of $d_n$ crucially depends on the decay rate of the eigenvalues of $C$ as well as on the spectral gaps (distances between eigenvalues). As these parameters are unknown, a practical guideline for the dimension reduction is needed. An approach to this problem in the context of this paper will be provided in Section~\ref{ss:fpe}.

\subsection{Fitting vector autoregressions to FPC scores}\label{ss:comp}

The goal of this section is to show that the one-step predictors $\hat Y_{n+1}$ in \eqref{fc_kl}, based on fitting VAR(1) models in Step 2 of Algorithm \ref{alg:1}, and $\tilde Y_{n+1}$ in \eqref{fc_standard} are asymptotically equivalent for FAR(1) processes. This statement is justified in the next theorem.

\begin{theorem}\label{th:equiv_VAR_FAR}
Suppose model \eqref{eq:FAR_mu} and let Assumption~{\tt FAR} hold. Assume that a VAR(1) model is fit to $\bY_1^e,\ldots,\bY_n^e$ by means of ordinary least squares. The resulting predictor \eqref{fc_kl} is asymptotically equivalent to \eqref{fc_standard}. More specifically, if for both estimators the same dimension $d$ is chosen, then
\[
\|\hat{Y}_{n+1}-\tilde{Y}_{n+1}\|=O_P\bigg(\frac{1}{n}\bigg)\qquad (n\to\infty).
\]
\end{theorem}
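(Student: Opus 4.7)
The plan is to exploit that both predictors $\hat Y_{n+1}$ and $\tilde Y_{n+1}$ lie in the $d$-dimensional span of $\{\hat v_1,\ldots,\hat v_d\}$, so it suffices to compare their coordinate vectors in this basis and invoke Parseval. First I would substitute $x=Y_n$ into \eqref{e:hatPsi} and regroup the triple sum: the coefficient of $\hat v_{\ell'}$ in $\tilde Y_{n+1}$ is the $\ell'$-entry of $\tilde B\bY_n^e$ with
\[
\tilde B=\tilde\Gamma_1\hat\Lambda^{-1},\qquad \tilde\Gamma_1=\frac{1}{n-1}\sum_{k=2}^{n}\bY_k^e(\bY_{k-1}^e)^\prime,\qquad \hat\Lambda=\mathrm{diag}(\hat\lambda_1,\ldots,\hat\lambda_d).
\]
On the other hand, the OLS VAR(1) estimator has the form $\hat B=\tilde\Gamma_1\hat\Gamma_0^{-1}$ with $\hat\Gamma_0=(n-1)^{-1}\sum_{k=2}^{n}\bY_{k-1}^e(\bY_{k-1}^e)^\prime$, so $\hat Y_{n+1}$ has coefficient vector $\hat B\bY_n^e$. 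Hence the two coefficient vectors differ by $\tilde\Gamma_1(\hat\Gamma_0^{-1}-\hat\Lambda^{-1})\bY_n^e$, and the whole problem reduces to controlling $\hat\Gamma_0^{-1}-\hat\Lambda^{-1}$.

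The key input is the \emph{exact} identity $n^{-1}\sum_{k=1}^{n}\bY_k^{e,c}(\bY_k^{e,c})^\prime=\hat\Lambda$, where $\bY_k^{e,c}$ denotes the vector with entries $\langle Y_k-\hat\mu_n,\hat v_\ell\rangle$; it follows directly from $\langle\hat C_n(\hat v_\ell),\hat v_{\ell'}\rangle=\hat\lambda_\ell\delta_{\ell\ell'}$. Because $\mu=0$ in Section~\ref{s:ffar} and Lemma~\ref{le:A1} yields $\|\hat\mu_n\|=O_P(n^{-1/2})$, switching between centered and uncentered scores perturbs the Gram matrix only by $O_P(n^{-1})$. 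Comparing $\hat\Gamma_0$ to $\hat\Lambda$ then amounts to the boundary correction $n^{-1}\bY_n^e(\bY_n^e)^\prime$ and the $(n-1)^{-1}$ versus $n^{-1}$ normalization; since $\|\bY_n^e\|=O_P(1)$ by stationarity, both contributions combine to give $\|\hat\Gamma_0-\hat\Lambda\|=O_P(n^{-1})$.

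From here I would use the perturbation identity $\hat\Gamma_0^{-1}-\hat\Lambda^{-1}=\hat\Gamma_0^{-1}(\hat\Lambda-\hat\Gamma_0)\hat\Lambda^{-1}$. For fixed $d$, both $\|\hat\Lambda^{-1}\|$ and $\|\hat\Gamma_0^{-1}\|$ are $O_P(1)$ because $\hat\lambda_d\convP\lambda_d>0$, using Theorem~7 of \cite{hk12} whose hypotheses are verified via Lemma~\ref{le:A1}. Together with $\|\tilde\Gamma_1\|=O_P(1)$ and $\|\bY_n^e\|=O_P(1)$, the coefficient-vector discrepancy is then $O_P(1/n)$, and applying Parseval to the orthonormal system $\{\hat v_1,\ldots,\hat v_d\}$ translates this into $\|\hat Y_{n+1}-\tilde Y_{n+1}\|=O_P(1/n)$.

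The hardest part is not the magnitude of any individual estimate but setting up the algebra so that the exact identity $n^{-1}\sum_{k=1}^{n}\bY_k^{e,c}(\bY_k^{e,c})^\prime=\hat\Lambda$ can be used to cancel the term that one would naively fear is $O_P(n^{-1/2})$. All edge effects — the mismatch between $\sum_{k=1}^n$ and $\sum_{k=2}^n$, the $(n-1)^{-1}$ versus $n^{-1}$ normalization, and the centering by $\hat\mu_n$ — must be tracked carefully so that only $O_P(n^{-1})$ residuals survive. A secondary, but routine, issue is justifying that $\hat\Gamma_0^{-1}$ exists with probability tending to one, which follows in the fixed-$d$ regime from the positivity of $\lambda_d$.
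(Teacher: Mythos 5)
Your argument is correct and follows essentially the same route as the paper's own proof: both reduce the comparison to the two Gram matrices of the score vectors (your $\hat\Gamma_0$ versus $\hat\Lambda$, the paper's $\hat\Gamma$ versus $\tilde\Gamma=\mathrm{diag}(\hat\lambda_1,\ldots,\hat\lambda_d)$), show their difference is $O_P(1/n)$ via the exact spectral identity plus boundary/normalization corrections, and conclude through a matrix-inverse perturbation bound using $\hat\lambda_d\convP\lambda_d>0$. Your explicit accounting for the $\hat\mu_n$-centering discrepancy is a small refinement the paper passes over silently, but it does not change the structure of the argument.
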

The proof of Theorem~\ref{th:equiv_VAR_FAR} is given in Section~\ref{ss:compv_f-b}, where the exact difference between the two predictors is detailed. These computations are based on a more detailed analysis given in Section~\ref{ss:compv_f} which reveals that the FPC score vectors $\bY_1^e,\ldots,\bY_n^e$ follow indeed a VAR(1) model, albeit the non-standard one
\[
\bY_k^e=B_d^e\bY_{k-1}^e+\bdelta_k,
\qquad k=2,\ldots,n,
\]
where the matrix $B_d^e$ is random and the errors $\bdelta_k$ depend on the lag $\bY_{k-1}^e$ (with precise definitions being given in Section~\ref{ss:compv_f}). Given this structure, one might suspect that the use of generalized least squares, GLS, could be advantageous. This is, however, not the case. Simulations not reported in this paper indicate that the gains in efficiency for GLS are negligible in the settings considered. This is arguably due to the fact that possible improvements may be significant only for small sample sizes for which, in turn, estimation errors more than make up the presumed advantage.

Turning to the case of FAR($p$) processes, notice first that Theorem~\ref{th:equiv_VAR_FAR} can be established for the more general autoregessive Hilbertian model (ARH(1)). In this case, the space $L^2([0,1])$ is replaced by a general separable Hilbert space. The proof remains literally unchanged.  Using this fact, a version of Theorem~\ref{th:equiv_VAR_FAR} for higher-order functional autoregressions can be derived by a change of Hilbert space. Following the approach in Section~5.1 of \cite{b00}, write the FAR($p$) process~\eqref{e:farp} in state space form
\begin{equation}\label{e:FARp}
\left(\begin{array}{l}
Y_k \\
Y_{k-1} \\
\phantom{Y}\vdots \\
Y_{k-p+1}
\end{array}
\right)
=
\left(\begin{array}{cccc}
\Psi_1 & \cdots & \Psi_{p-1} & \Psi_p \\
\mathrm{Id} & & & 0 \\
& \ddots & & \vdots \\
& & \mathrm{Id} & 0
\end{array}
\right)
\left(\begin{array}{l}
Y_{k-1} \\
Y_{k-2} \\
\phantom{Y}\vdots \\
Y_{k-p}
\end{array}
\right)
+
\left(\begin{array}{c}
\varepsilon_k \\
0 \\
\vdots \\
0
\end{array}
\right).
\end{equation}
The left-hand side of \eqref{e:FARp} is a $p$-vector of functions. It takes values in the space $H_p=(L^2[0,1])^p$. The matrix on the right-hand side of \eqref{e:FARp} is a matrix of operators which will be denoted by $\Psi^*$. The components $\mathrm{Id}$  and 0 stand for the identity and the zero operator on $H$, respectively.  Equipped with the inner product $\langle x,y\rangle_p=\sum_{j=1}^p\langle x_j,y_j\rangle$ the space $H_p$ defines a Hilbert space. Setting $X_k=(Y_k,\ldots,Y_{k-p+1})'$ and $\delta_k=(\varepsilon_k,0,\ldots,0)'$, equation  \eqref{e:FARp} can be written as $X_k=\Psi^*(X_{k-1})+\delta_k$, with $\delta_k\in L_{H_p}^2$. Now, in analogy to  \eqref{fc_kl} and \eqref{fc_standard}, one can derive the vector-functional predictors $\hat X_k=(\hat X_{k}^{(1)},\ldots,\hat X_k^{(p)})^\prime$ and $\tilde X_k=(\tilde X_{k}^{(1)},\ldots,\tilde X_k^{(p)})^\prime$ and obtain that  $\|\hat X_k-\tilde X_k\|_p=O_P(1/n)$, where $\|x\|_p=\sqrt{\langle x,x\rangle_p}$. Then, the following corollary is immediate.
\begin{corollary}\label{cor:equiv_VAR_FAR}
Consider the FAR($p$) model \eqref{e:farp} and let Assumption~{\tt FAR} hold.  Further suppose that  $\|(\Psi^*)^{k_0}\|_\mathcal{L}<1$ for some $k_0\geq 1$. Then setting 
$\hat Y_k=\hat X_k^{(1)}$ and $\tilde Y_k=\tilde X_k^{(1)}$ one obtains
$
\|\hat{Y}_{n+1}-\tilde{Y}_{n+1}\|=O_P(1/n),
$ as $n\to\infty$.
\end{corollary}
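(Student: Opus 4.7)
My approach is to reduce to the ARH(1) setting and apply Theorem~\ref{th:equiv_VAR_FAR}, essentially carrying out the lift-and-project argument that is sketched in the paragraph preceding the corollary. First I would recast the FAR($p$) process as an ARH(1) on the product Hilbert space $H_p=(L^2[0,1])^p$, endowed with the inner product $\langle x,y\rangle_p=\sum_{j=1}^p\langle x_j,y_j\rangle$. With $X_k=(Y_k,\ldots,Y_{k-p+1})^\prime$, $\delta_k=(\varepsilon_k,0,\ldots,0)^\prime$, and $\Psi^*$ the operator matrix displayed in \eqref{e:FARp}, the process satisfies $X_k=\Psi^*(X_{k-1})+\delta_k$.

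Next I would verify that this lifted process satisfies Assumption~\texttt{FAR} in $H_p$. The innovations $\delta_k$ are clearly i.i.d.\ in $L_{H_p}^2$ with mean zero, because the $\varepsilon_k$ are i.i.d.\ in $L_H^2$ and $\|\delta_k\|_p=\|\varepsilon_k\|$. The requirement $\|(\Psi^*)^{k_0}\|_\mathcal{L}<1$ for some $k_0\geq 1$ is assumed in the statement, which guarantees the existence of a unique stationary and causal solution $(X_k\colon k\in\mathbb{Z})$ in $L_{H_p}^2$ (this is the ARH(1) analogue of the usual FAR(1) condition; see Section~5.1 of \cite{b00}). So all hypotheses of Theorem~\ref{th:equiv_VAR_FAR}, in its ARH(1) formulation on the generic separable Hilbert space $H_p$, are satisfied. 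Applying the theorem directly yields the vector-functional bound
\[
\|\hat X_{n+1}-\tilde X_{n+1}\|_p=O_P\!\left(\frac{1}{n}\right),\qquad n\to\infty,
\]
where $\hat X_{n+1}$ is produced by Algorithm~\ref{alg:1} applied to the $H_p$-valued data (FPCA in $H_p$, VAR(1) on the resulting score vectors, re-expansion via the truncated Karhunen--Lo\`eve representation) and $\tilde X_{n+1}=\tilde\Psi^*_n(X_n)$ is the Bosq-type one-step predictor in $H_p$.

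Finally I would project onto the first coordinate. Writing $\pi_1\colon H_p\to H$ for the projection $\pi_1(x_1,\ldots,x_p)=x_1$, one has $\|\pi_1(u)\|\leq \|u\|_p$ for every $u\in H_p$. Setting $\hat Y_{n+1}=\hat X_{n+1}^{(1)}=\pi_1(\hat X_{n+1})$ and $\tilde Y_{n+1}=\tilde X_{n+1}^{(1)}=\pi_1(\tilde X_{n+1})$ therefore gives
\[
\|\hat Y_{n+1}-\tilde Y_{n+1}\|\leq \|\hat X_{n+1}-\tilde X_{n+1}\|_p=O_P(1/n),
\]
which is the asserted bound.

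\textbf{Main obstacle.} The only substantive issue, assuming Theorem~\ref{th:equiv_VAR_FAR} is taken for granted in the ARH(1) generality the authors claim, is making sure that the two objects $\hat X_{n+1}$ and $\tilde X_{n+1}$ constructed on $H_p$ really coincide (up to the first coordinate) with natural predictors of the FAR($p$) process, so that projecting via $\pi_1$ yields genuine one-step-ahead predictors of $Y_{n+1}$. On the $\tilde X$ side this is straightforward since the block structure of $\Psi^*$ reproduces the FAR($p$) prediction in its first coordinate; on the $\hat X$ side one must check that VAR(1) fitting to the FPC scores of $X_k$ on $H_p$ is a meaningful predictor, which is exactly the content of Algorithm~\ref{alg:1} in the generic Hilbert-space setting. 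No further nontrivial work is needed.
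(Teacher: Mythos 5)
Your proposal matches the paper's own argument: the authors likewise lift the FAR($p$) process to an ARH(1) on $H_p$, invoke the ARH(1) version of Theorem~\ref{th:equiv_VAR_FAR} to get $\|\hat X_{n+1}-\tilde X_{n+1}\|_p=O_P(1/n)$, and then declare the corollary immediate by reading off the first coordinate. Your explicit verification of Assumption~\texttt{FAR} on $H_p$ and the projection bound $\|\pi_1(u)\|\leq\|u\|_p$ simply fill in details the paper leaves implicit.
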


\subsection{Assessing the error caused by dimension reduction}\label{s:3.3new}

Assume the underlying functional time series to be the causal FAR($p$) process. In the population setting, meaning the model is fully known, the best linear one-step ahead prediction (in the sense of mean-squared loss) is $Y^*_{n+1}=\Psi_1(Y_{n})+\cdots \Psi_p(Y_{n-p+1})$, provided $n\geq p$. In this case, the smallest attainable mean-squared prediction error is $\sigma^2:=E[\|\varepsilon_{n+1}\|^2]$. Both estimation methods described in Sections~\ref{ss:ffar} and \ref{ss:comp}, however, give predictions that live on a $d$-dimensional subspace of the original function space. This dimension reduction step clearly introduces a bias, whose magnitude is bounded in this section. It turns out that the bias becomes negligible as $d\to\infty$, thereby providing a theoretical justification for the proposed methodology described in the next subsection.

Unlike in the previous section, it will be avoided to build the proposed procedure on the state space representation~\eqref{e:FARp}. Rather a VAR($p$) model is directly fit by means of ordinary least squares to the $d$-dimensional score sequence. Continuing to work on the population level, the theoretical predictor 
\[
\hat{Y}_{n+1}=\hat y_{n+1,1}v_1+\ldots+\hat y_{n+1,d}v_d,
\]
is analyzed, where $y_{k,\ell}=\langle Y_k,v_\ell\rangle$ and $\hat y_{k,\ell}$ its one-step ahead linear prediction. Recall that a bounded linear operator $A$ is called Hilbert-Schmidt if, for some orthonormal basis $(e_\ell\colon \ell\in\mathbb{N})$, $\|A\|_\mathcal{S}^2=\sum_{\ell=1}^\infty\|A(e_\ell)\|^2<\infty$. Note that $\|\cdot\|_\mathcal{S}$ defines a norm on the space of compact operators which can be shown to be independent of the choice of basis $(e_\ell\colon \ell\in\mathbb{N})$.

\begin{theorem}\label{th:prederror}
Consider the FAR($p$) model \eqref{e:farp} and suppose that Assumption~{\tt FAR} holds. Suppose further that $\Psi_1,\ldots, \Psi_p$ are  Hilbert-Schmidt operators. Then
\begin{equation}\label{e:MSEbound1}
E\big[\|Y_{n+1}-\hat Y_{n+1}\|^2\big]\leq\sigma^2+\gamma_d,
\end{equation}
where
\[
\gamma_d=\bigg(1+\bigg[\sum_{j=1}^p\psi_{j;d}\bigg]^2\bigg)\sum_{\ell=d+1}^\infty\lambda_\ell
\qquad\text{and}\qquad
\psi_{j;d}^2=\sum_{\ell=d+1}^\infty\|\Psi_j(v_\ell)\|^2.
\]
\end{theorem}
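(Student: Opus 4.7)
The plan is to exploit the orthonormality of the eigenbasis to split the error into a ``tail'' piece and a ``head'' piece, then bound the head piece by comparing the optimal predictor $\hat{\bY}_{n+1}\in\mathbb{R}^d$ against an explicit (non-optimal) linear predictor obtained by truncating the FAR coefficients.

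First I would write $Y_{n+1}-\hat Y_{n+1}=\sum_{\ell=1}^d(y_{n+1,\ell}-\hat y_{n+1,\ell})v_\ell+\sum_{\ell=d+1}^\infty y_{n+1,\ell}v_\ell$, so that Parseval gives
\[
E\bigl\|Y_{n+1}-\hat Y_{n+1}\bigr\|^2=\sum_{\ell=1}^d E\bigl[(y_{n+1,\ell}-\hat y_{n+1,\ell})^2\bigr]+\sum_{\ell=d+1}^\infty\lambda_\ell,
\]
since $E[y_{n+1,\ell}^2]=\lambda_\ell$. Because $\hat{\bY}_{n+1}$ is the \emph{best} linear predictor of $\bY_{n+1}$ from the past scores, the first sum is bounded above by the mean-squared error of any competing linear predictor $Z=(Z_1,\ldots,Z_d)^\prime$. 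I would pick $Z_{\ell^\prime}=\sum_{j=1}^p\sum_{\ell=1}^d\psi_{j,\ell,\ell^\prime}y_{n+1-j,\ell}$ with $\psi_{j,\ell,\ell^\prime}=\langle\Psi_j(v_\ell),v_{\ell^\prime}\rangle$, i.e.\ the natural truncation of the FAR recursion projected onto $v_{\ell^\prime}$.

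Expanding $y_{n+1,\ell^\prime}$ via \eqref{e:farp} and subtracting $Z_{\ell^\prime}$ yields
\[
y_{n+1,\ell^\prime}-Z_{\ell^\prime}=\sum_{j=1}^p\sum_{\ell=d+1}^\infty\psi_{j,\ell,\ell^\prime}\,y_{n+1-j,\ell}+\langle\varepsilon_{n+1},v_{\ell^\prime}\rangle.
\]
Causality of the FAR($p$) solution makes $\varepsilon_{n+1}$ independent of the past scores, so the cross term vanishes in expectation. Summing $\ell^\prime=1,\ldots,d$ the innovation piece contributes at most $\sum_{\ell^\prime=1}^\infty E[\langle\varepsilon_{n+1},v_{\ell^\prime}\rangle^2]=\sigma^2$. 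For the remaining ``truncation'' piece I would apply Minkowski's inequality in $\ell^2(\{1,\ldots,d\})$ to pull the sum over $j$ outside, giving
\[
\bigg(\sum_{\ell^\prime=1}^d E\bigg[\bigg(\sum_{j=1}^p\sum_{\ell=d+1}^\infty\psi_{j,\ell,\ell^\prime}\,y_{n+1-j,\ell}\bigg)^2\bigg]\bigg)^{1/2}\le\sum_{j=1}^p\bigg(\sum_{\ell^\prime=1}^d\sum_{\ell=d+1}^\infty\psi_{j,\ell,\ell^\prime}^2\lambda_\ell\bigg)^{1/2},
\]
where orthogonality of the scores was used inside each $j$-summand.

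The final manipulation is the one I expect to be the delicate step: for fixed $j$,
\[
\sum_{\ell^\prime=1}^d\sum_{\ell=d+1}^\infty\psi_{j,\ell,\ell^\prime}^2\lambda_\ell\le\sum_{\ell=d+1}^\infty\lambda_\ell\,\|\Psi_j(v_\ell)\|^2\le\Bigl(\max_{\ell\ge d+1}\|\Psi_j(v_\ell)\|^2\Bigr)\sum_{\ell=d+1}^\infty\lambda_\ell\le\psi_{j;d}^{\,2}\sum_{\ell=d+1}^\infty\lambda_\ell,
\]
where Parseval on $\Psi_j(v_\ell)$ was used in the first inequality and the ``max$\leq$sum'' trick in the last, legitimized by the Hilbert--Schmidt assumption (which ensures $\psi_{j;d}<\infty$ and in fact $\psi_{j;d}\to0$). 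Squaring and combining yields the bound $\sigma^2+(\sum_j\psi_{j;d})^2\sum_{\ell>d}\lambda_\ell$ for the head piece; adding the tail $\sum_{\ell>d}\lambda_\ell$ gives exactly $\sigma^2+\gamma_d$. The only real obstacle is recognizing that one can trade a joint index bound against a product of two tails via the max-vs-sum estimate; without the Hilbert--Schmidt hypothesis this step would fail.
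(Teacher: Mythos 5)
Your proposal is correct and follows essentially the same route as the paper's proof: the Parseval head/tail split, the comparison of the best linear predictor with the truncated FAR predictor, causality to kill the cross term, Bessel for the innovation contribution $\sigma^2$, and the bound $\big(\sum_{j=1}^p\psi_{j;d}\big)^2\sum_{\ell>d}\lambda_\ell$ for the truncation term. The only (harmless) difference is in the last chain of estimates, where you use Minkowski over $j$ together with the exact lag-zero uncorrelatedness of the scores and a max-versus-sum bound, whereas the paper expands the square and applies the Cauchy--Schwarz inequality twice; both yield the same constant $\gamma_d$.
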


The proof of Theorem \ref{th:prederror} is given in Appendix~\ref{proof:th3.2}.

The constant $\gamma_d$ bounds the additional prediction error due to dimension reduction. It decomposes into two terms. The first is given by the fraction of variance explained by the principal components $(v_\ell\colon\ell>d)$. The second term gives the contribution these principal components make to the Hilbert-Schmidt norm of the $\Psi_j$.  Note that $\psi_{j;d}\leq\|\Psi_j\|_{\mathcal{S}}$ and that $\sum_{\ell=1}^\infty\lambda_\ell=\sigma^2$. As a simple consequence, the error in \eqref{e:MSEbound1} tends indeed to $\sigma^2$ for $d\to\infty$.

This useful result, however, does not provide a practical guideline for choosing $d$ in the proposed algorithm because the bound in \eqref{e:MSEbound1} becomes smaller with increasing $d$. Rather $\gamma_d$ has to be viewed as the asymptotic error due to dimension reduction, when $d$ is fixed and $n\to\infty$. In practice one does not have full information on the model for the observations $Y_1,\ldots,Y_n$ and consequently several quantities, such as the autocovariance structure of the score vectors, have to be estimated. Then, with larger $d$, the variance of these estimators increases. In the next section, a novel criterion is provided that allows to simultaneously choose the dimension $d$ and the order $p$ in dependence of the sample size $n$. This is achieved with the objective of minimizing the mean-squared prediction error MSE.

\subsection{Model and dimension selection}
\label{ss:fpe}

Given that the objective of this paper is prediction, it makes sense to choose the model to be fitted to the data as well as the dimension $d$ of the proposed approach such that the MSE is minimized.  Population principal components are still considered (recalling that estimators are $\sqrt{n}$-consistent), but in contrast to the previous section estimated processes are studied. The resulting additional estimation error will now be taken into account.  

Let $(Y_k)$ be a centered functional time series in $L_H^2$. Motivated by Corollary~\ref{cor:equiv_VAR_FAR} VAR($p$) models are fitted to the score vectors. The target is to propose a fully automatic criterion for choosing $d$ and $p$.
By orthogonality of the eigenfunctions $(v_\ell\colon\ell\in\mathbb{N})$ and the fact that the FPC scores $(y_{n,\ell}\colon\ell\in\mathbb{N})$ are uncorrelated, the MSE can be decomposed as
\begin{align*}
E\big[\|Y_{n+1}-\hat Y_{n+1}\|^2\big]
&=E\Bigg[\bigg\|\sum_{\ell=1}^\infty y_{n+1,\ell}v_\ell-\sum_{\ell=1}^d\hat y_{n+1,\ell}v_\ell\bigg\|^2\Bigg]
=E\big[\|\boldsymbol{Y}_{n+1}-\hat{\boldsymbol{Y}}_{n+1}\|^2\big]+\sum_{\ell=d+1}^\infty\lambda_\ell,
\end{align*}
where $\|\cdot\|$ is also used to denote the Euclidean norm of vectors. The process $(\boldsymbol{Y}_{n})$ is again stationary. Assuming that it follows a $d$-variate VAR($p$) model, that is,
$$
\boldsymbol{Y}_{n+1}=\Phi_1\boldsymbol{Y}_n+\cdots +\Phi_p\boldsymbol{Y}_{n-p+1}+\boldsymbol{Z}_{n+1},
$$
with some appropriate white noise $(\boldsymbol{Z}_n)$, it can be shown (see, for example, \cite{l06}) that 
\begin{equation}\label{e:asympnorm}
\sqrt{n}(\boldsymbol{\hat\beta}-\boldsymbol{\beta})\stackrel{d}{\rightarrow} \mathcal{N}_{pd^2}(\boldsymbol{0}, \Sigma_{\boldsymbol{Z}}\otimes\Gamma_p^{-1}),
\end{equation}
where $\boldsymbol{\beta}=\mathrm{vec}([\Phi_1,\ldots,\Phi_p]^\prime)$ and $\boldsymbol{\hat\beta}=\mathrm{vec}([\hat\Phi_1,\ldots,\hat\Phi_p]^\prime)$ is its least squares estimator, and where $\Gamma_p=\mathrm{Var}(\mathrm{vec}[\boldsymbol{Y}_p,\ldots,\boldsymbol{Y}_1])$ and $\Sigma_{\boldsymbol{Z}}=E[\boldsymbol{Z}_1\boldsymbol{Z}_1^\prime]$. Suppose now that the estimator $\boldsymbol{\hat\beta}$ has been obtained from some independent training sample $(\boldsymbol{X}_1,\ldots,\boldsymbol{X}_n)\stackrel{d}{=}(\boldsymbol{Y}_1,\ldots,\boldsymbol{Y}_n)$. Such an assumption is common in the literature. See, for example, the discussion on page 95 of \cite{l06}. It follows then that
\begin{align*}
E \big[ \|\boldsymbol{Y}_{n+1}-\hat{\boldsymbol{Y}}_{n+1}\|^2 \big] &=
E \big[ \|\boldsymbol{Y}_{n+1}-(\hat\Phi_1\boldsymbol{Y}_n+\cdots+\hat\Phi_p\boldsymbol{Y}_{n-p+1})\|^2 \big] \\
&=E\big[\|\boldsymbol{Z}_{n+1}\|^2\big]+E\big[  \|(\Phi_1-\hat\Phi_1)\boldsymbol{Y}_n+\cdots+(\Phi_p-\hat\Phi_p)\boldsymbol{Y}_{n-p+1})\|^2\big] \\
&=\mathrm{tr}(\Sigma_{\boldsymbol{Z}})+E\big[\|[I_p\otimes(\boldsymbol{Y}_n',\ldots,\boldsymbol{Y}_{n-p+1}')](\boldsymbol{\beta}-\boldsymbol{\hat\beta})\|^2\big].
\end{align*}
The independence of $\boldsymbol{\hat\beta}$ and $(\boldsymbol{Y}_1,\ldots,\boldsymbol{Y}_n)$ yields that
\begin{align*}
E\big[\|[I_p\otimes(\boldsymbol{Y}_n',\ldots,\boldsymbol{Y}_{n-p+1}')](\boldsymbol{\beta}-\boldsymbol{\hat\beta})\|^2\big]
&=E\big[\mathrm{tr}\big\{(\boldsymbol{\beta}-\boldsymbol{\hat\beta})'[I_p\otimes\Gamma_p](\boldsymbol{\beta}-\boldsymbol{\hat\beta})\big\}\big]\\
&=
\mathrm{tr}\big\{[I_p\otimes\Gamma_p]E\big[(\boldsymbol{\beta}-\boldsymbol{\hat\beta})(\boldsymbol{\beta}-\boldsymbol{\hat\beta})'\big]\big\}.
\end{align*}
Using \eqref{e:asympnorm}, it follows that the last term is
\[
\frac{1}{n}\left(\mathrm{tr}\left[\Sigma_{\boldsymbol{Z}}\otimes I_{pd}\right]+o(1)\right)\sim\frac{pd}{n}\,\mathrm{tr}(\Sigma_{\boldsymbol{Z}}).
\]
(Here $a_n\sim b_n$ means $a_n/b_n\to 1$.)
Combining the previous estimates and replacing $\mathrm{tr}(\Sigma_{\boldsymbol{Z}})$ by $n(n-pd)^{-1}\mathrm{tr}(\hat \Sigma_{\boldsymbol{Z}})$ , leads to
\[
E\big[\|Y_{n+1}-\hat Y_{n+1}\|^2\big]\approx \frac{n+pd}{n-pd}\,\mathrm{tr}(\hat\Sigma_{\boldsymbol{ Z}})+\sum_{\ell>d}\lambda_\ell.
\]
It is therefore proposed to jointly select the order $p$ and the dimension $d$ as the minimizers of the functional final prediction error-type criterion
\begin{equation}
\label{eq:fFPE}
\mathrm{fFPE}(p,d)=\frac{n+pd}{n-pd}\,\mathrm{tr}(\hat\Sigma_{\boldsymbol{Z}})+\sum_{\ell>d}\hat\lambda_\ell.
\end{equation}
With the use of the functional FPE criterion, the proposed prediction methodology becomes fully data driven and does not need the additional subjective specification of tuning parameters. It is in particular noteworthy that the selection of $d$ is now made in dependence of the sample size $n$. The excellent practical performance of this method is demonstrated in Sections~\ref{s:sim} and~\ref{s:appl}.

It should finally be noted that in a multivariate context \cite{a69} originally suggested the use of the log-determinant in place of the trace in \eqref{eq:fFPE} so as to make his FPE criterion equivalent to the AIC criterion (see \cite{l06}). Here, however, the use of the trace is recommended, since this puts the two terms in \eqref{eq:fFPE} on the same scale. 

\section{Prediction with covariates}\label{s:covariates}

In many practical problems, such as in the particulate matter example presented in Section \ref{s:appl}, predictions could not only contain lagged values of the functional time series of interest, but also other exogenous covariates. These covariates might be scalar, vector-valued and functional. Formally the goal is then to obtain a predictor $\hat Y_{n+h}$ given observations of the curves $Y_1,\ldots,Y_n$ and a number of covariates $X_n^{(1)},\ldots,X_n^{(r)}$. The exogenous variables need not be defined on the same space. For example, $X_n^{(1)}$ could be scalar, $X_n^{(2)}$ a function and $X_n^{(3)}$ could contain lagged values of $X_n^{(2)}$. The following adaptation of the methodology given in Algorithm \ref{alg:1} is derived under the assumption that $(Y_k\colon k\in\mathbb{Z})$ as well as the covariates $(X_n^{(i)}\colon n\in\mathbb{N})$ are stationary processes in their respective spaces. The modified procedure is summarized in Algorithm \ref{alg:3}.
\begin{algo}
\caption{Functional Prediction with Exogenous Covariates}
\label{alg:3}
\begin{enumerate}
\item (a) Fix $d$. For $k=1,\ldots,n$, use the data $Y_1,\ldots,Y_n$ to compute the vectors
\[
\bY_k^e=(y_{k,1}^e,\ldots,y_{k,d}^e)^\prime,
\]
containing the first $d$ empirical FPC scores $y_{k,\ell}^e=\langle Y_k,\hat{v}_\ell\rangle$.

(b) For a functional covariate, fix $d^\prime$. For $k=1,\ldots,n$, use the data $X_1,\ldots,X_n$ to compute the vectors
\[
\bX_k^e=(x_{k,1}^e,\ldots,x_{k,d^\prime}^e)^\prime,
\]
containing the first $d^\prime$ empirical FPC scores $x_{k,\ell}^e=\langle X_k,\hat{w}_\ell\rangle$. Repeat this step for each functional covariate. 

(c) Combine all covariate vectors into one vector $\bR_n^e=(R_{n1}^e,\ldots,R_{nr}^e)^\prime$.
\item Fix $h$. Use $\bY_1^e,\ldots,\bY_n^e$ and $\bR_n^e$ to determine the $h$-step ahead prediction
\[
\hat\bY_{n+h}^e=(\hat{y}^e_{n+h,1},\ldots,\hat{y}^e_{n+h,d})^\prime
\]
for $\bY_{n+h}^e$ with an appropriate multivariate algorithm.
\item Use the functional object
\[
\hat{Y}_{n+h}=\hat{y}_{n+h,1}^e\,\hat{v}_1+\cdots+\hat{y}_{n+h,d}^e\hat{v}_d
\]
as $h$-step ahead prediction for $Y_{n+h}$.
\end{enumerate}
\end{algo}

The first step of Algorithm \ref{alg:3} is expanded compared to Algorithm \ref{alg:1}. Step 1(a) performs FPCA on the response time series curves $Y_1,\ldots,Y_n$. In Step 1(b), all functional covariates are first transformed via FPCA into empirical FPC score vectors. For each functional covariate, a different number of principal components can be selected. Vector-valued and scalar covariates can be used directly. All exogenous covariates are finally combined into one vector $\bR_n^e$ in Step 1(c).

Details for Step~2 and the one-step ahead prediction case $h=1$ could be as follows. Since stationarity is assumed for all involved processes, the resulting FPC scores form stationary time series. Define hence
\[
\Gamma_{\bY\bY}(i)=\cov(\bY_k^e,\bY_{k-i}^e),\qquad
\Gamma_{\bY\bR}(i)=\cov(\bY_k^e,\bR_{k-i}^e),\qquad
\Gamma_{\bR\bR}=\cov(\bR_k^e,\bR_{k}^e)
\]
and notice that these matrices are independent of $k$. Fix $m\in\{1,\ldots,n\}$. The best linear predictor $\hat \bY_{n+1}^e$ of $\bY_{n+1}^e$ given the vector variables $\bY_n^e,\ldots,\bY_{n-m+1}^e,\bR_n^e$ can be obtained by projecting each component $y_{n+1,\ell}^e$ of $\bY_{n+1}^e$ onto
$
\overline{\mathrm{sp}}\{y_{k,i}^e,R_{nj}^e|\, 1\leq i\leq d,\, 1\leq j\leq r,\, n-m+1\leq k\leq n\}.
$
Then there exist $d\times d$ matrices $\Phi_i$ and a $d\times r$ matrix $\Theta$, such that
\[
\hat \bY_{n+1}^e=\Phi_1\bY_n^e+\Phi_2\bY_{n-1}^e+\cdots+\Phi_m\bY_{n-m+1}^e+\Theta\bR_n^e.
\]
Using the projection theorem, it can be easily shown that the matrices $\Phi_1,\ldots,\Phi_m$ and $\Theta$ are characterized by the equations
\begin{align*}
\Gamma_{\bY\bY}(i+1)&=\Phi_1\Gamma_{\bY\bY}(i)+\cdots+\Phi_m\Gamma_{\bY\bY}(i+1-m)+\Theta\Gamma_{\bR\bY}(i),\quad i=0,\ldots, m-1;\\
\Gamma_{\bY\bR}(1)&=\Phi_1\Gamma_{\bY\bR}(0)+\cdots+\Phi_m\Gamma_{\bY\bR}(1-m)+\Theta\Gamma_{\bR\bR}.
\end{align*}
Let
\begin{align*}
\Gamma=\begin{pmatrix}
\Gamma_{\bY\bY}(0)& \Gamma_{\bY\bY}(1) &\cdots& \Gamma_{\bY\bY}(m-1)&\Gamma_{\bY\bR}(0)\\
\Gamma_{\bY\bY}(-1)&\Gamma_{\bY\bY}(0)&\cdots&\Gamma_{\bY\bY}(m-2)&\Gamma_{\bY\bR}(-1)\\
\vdots&\vdots&\ddots&\vdots&\vdots\\
\Gamma_{\bY\bY}(1-m)&\Gamma_{\bY\bY}(2-m)&\cdots&\Gamma_{\bY\bY}(0)&\Gamma_{\bY\bR}(1-m)\\
\Gamma_{\bR\bY}(0)&\Gamma_{\bR\bY}(1)&\cdots&\Gamma_{\bR\bY}(m-1)&\Gamma_{\bR\bR}(0)
\end{pmatrix}.
\end{align*}
Assuming that $\Gamma$ has full rank, it follows that
\begin{align*}
(\Phi_1,\Phi_2,\ldots,\Phi_m,\Theta)&=(\Gamma_{\bY\bY}(1),\ldots,\Gamma_{\bY\bY}(m),\Gamma_{\bY\bR}(1))\Gamma^{-1}.
\end{align*}
The matrices $\Gamma_{\bY\bY}(i)$, $\Gamma_{\bY\bR}(i)$ and  $\Gamma_{\bR\bR}$ have to be replaced in practice by the corresponding sample versions.  This explains why predictions should not be made conditional on all data $\bY_1,\ldots,\bY_n$. It would involve the matrices $\Gamma_{\bY\bY}(n),\Gamma_{\bY\bY}(n-1),\ldots$ which cannot be reasonably estimated from the sample. In the application of Section~\ref{s:appl} a VARX($p$) model of dimension~$d$ is fitted. The dimension~$d$ and the order~$p$ are selected by the adjusted fFPE criterion \eqref{eq:fFPEX}

\section{Additional options}
\label{s:ex}

\subsection{Using the innovations algorithm}

The proposed methodology has been developed with a focus on functional autoregressive processes. For this case a fully automatic prediction procedure has been constructed in Section \ref{ss:fpe}. It should be noted, however, that other options are generally available to the practitioner as well if one seeks to go beyond the FAR framework. One way to do this would be to view the fitted FAR process as a best approximation to the underlying stationary functional time series in the sense of the functional FPE-type criterion in \ref{ss:fpe}. 

In certain cases a more parsimonious modeling could be achieved if one instead used the innovations algorithm in Step 2 of Algorithm \ref{alg:1}. The advantage of the innovations algorithm is that it can be updated quickly when new observations arrive. It should be particularly useful if one has to predict functional moving average processes that have an infinite functional autoregressive representation with coefficient operators whose norms only slowly decay with the lag. The application of Algorithm~\ref{alg:2} requires the estimation of covariances $\Gamma(k)$ for increasing lag $k$. Such estimates are less reliable the smaller $n$ and the larger $k$. Therefore including too many lag values has a negative effect on the estimation accuracy. If estimated eigenfunctions and the covariance matrices $\hat\Gamma(k)$ are replaced by population analogues, then this algorithm gives the best linear prediction (in mean square sense) of the population FPC scores based on the last $m$ observations. 

\begin{algo}
\caption{The Innovations Algorithm for Step 2 in Algorithm \ref{alg:1}}
\label{alg:2}
\begin{enumerate}
\item Fix $m\in \{1,\ldots,n\}$. The last $m$ observations will be used to compute the predictor.
\item For $k=0,1,\ldots,m$, compute
\[ \hat\Gamma(k)=\frac{1}{k}\sum_{j=1}^k(\hat\bY_{j}^e-\bar{\bY}^e)(\hat\bY_{j}^e-\bar{\bY}^e)^\prime,
\]
where $\bar{\bY}^e=\frac{1}{n}\sum_{k=1}^n\hat\bY_{k}^e$.
\item Set
\[
\hat\bY_{n+1}^e=\sum_{j=1}^m\Theta_{mj}(\bY_{n+1-j}^e-\hat\bY_{n+1-j}^e),
\]
where
\begin{align*}
&\Theta_{00}=\hat\Gamma(0),\\
&\Theta_{m,m-k}=\left(\hat\Gamma(n-k)-\sum_{j=0}^{k-1}\Theta_{m,m-j}\Theta_{j0}\Theta^\prime_{k,k-j}\right)\Theta_{k0}^{-1},\qquad k=0,\ldots,m-1,\\
&\Theta_{m0}=\hat\Gamma(0)-\sum_{j=0}^{m-1}\Theta_{m,m-j}\Theta_{j0}\Theta^\prime_{m,m-j}.
\end{align*}
The recursion is solved in the order $\Theta_{00};\Theta_{11},\Theta_{10};\Theta_{22},\Theta_{21},\Theta_{20};\ldots$
\end{enumerate}
\end{algo}

\begin{algo}
\caption{Algorithm for determining prediction bands}
\label{alg:predbands}
\begin{enumerate}
\item Compute the $d$-variate score vectors $\bY_{1}^e,\ldots,\bY_{n}^e$ and the
sample FPCs $\hat v_1,\ldots,\hat v_d$.
\item For $L>0$ fix $k\in\{L+1,\ldots,n-1\}$ and compute
$$
\hat Y_{k+1}=\hat{y}_{k+1,1}^e\,\hat{v}_1+\cdots+\hat{y}_{k+1,d}^e\hat{v}_d,
$$
where $\hat{y}_{k+1,1}^e,\ldots,\hat{y}_{k+1,d}^e$, are the components of the one-step ahead prediction obtained from $\bY_{1}^e,\ldots,\bY_{k}^e$ by means of a multivariate algorithm.
\item Let $M=n-L$. For $k\in\{1,\ldots,M\}$, define the residuals $\hat\epsilon_{k}=Y_{k+L}-\hat Y_{k+L}$.
\item For $t\in[0,1]$, define $\gamma(t)=\mathrm{sd}(\hat\epsilon_k(t)\colon k=1\ldots,M)$.
\item Determine $\underline{\xi}_\alpha,\overline{\xi}_\alpha$ such that $\alpha\times 100\%$ of the residuals satisfy 
$$ 
-\underline{\xi}_\alpha\gamma(t)\leq \hat\epsilon_i(t)\leq \overline{\xi}_\alpha\gamma(t)\quad\text{for all $t\in[0,1]$}.
$$
\end{enumerate}
\end{algo}

\subsection{Prediction bands}

To assess the forecast accuracy, a method for computing uniform prediction bands is provided in this section. The target is to find parameters $\underline{\xi}_\alpha,\overline{\xi}_\alpha\geq 0$, such that, for a given $\alpha\in (0,1)$ and $\gamma\colon[0,1]\to [0,\infty)$, 
$$
P\Big(\hat Y_{n+1}(t)-\underline{\xi}_\alpha\gamma(t)\leq Y_{n+1}(t)\leq \hat Y_{n+1}(t)+\overline{\xi}_\alpha\gamma(t)\quad\text{for all $t\in[0,1]$}\Big)=\alpha.
$$
There is no a priori restriction on the function $\gamma$, but clearly it should account for the structure and variation of the data. Although this problem is very interesting from a theoretical standpoint, only a practical approach for the determination of $\underline{\xi}_\alpha,\overline{\xi}_\alpha$ and $\gamma$ is proposed here. It is outlined in Algorithm~\ref{alg:predbands}.

The purpose of the parameter $L$ is to ensure a reasonable sample size for the predictions in Step~2 of Algorithm~\ref{alg:predbands}. The residuals $\hat\epsilon_1,\ldots\hat\epsilon_M$ are then expected to be approximately stationary and, by a law of large numbers effect, to satisfy
\begin{align*}
&\frac{1}{M}\sum_{k=1}^MI\Big(-\underline{\xi}_\alpha\gamma(t)\leq \hat\epsilon_k(t)\leq \overline{\xi}_\alpha\gamma(t)\quad\text{for all } t\in[0,1]\Big)\\
&\quad\approx 
P\Big(-\underline{\xi}_\alpha\gamma(t)\leq Y_{n+1}(t)-\hat Y_{n+1}(t)\leq 
\overline{\xi}_\alpha\gamma(t)\quad\text{for all } t\in[0,1]\Big).
\end{align*}
Note that, in Step~1, the principal components $\hat v_1,\ldots,\hat v_d$ have been obtained from the entire sample $Y_1,\ldots,Y_n$ and not just from the first
$k$ observations. The choice of $\gamma$ in Step~4 clearly accounts for the variation of the data. For an intraday time exhibiting a higher volatility there should also be a broader prediction interval. Typically the constants $\underline{\xi}_\alpha$ and $\overline{\xi}_\alpha$ are chosen equal, but there may be situations when this is not desired.

One advantage of this method is that it does not require particular model assumptions. If two competing prediction methods exist, then the one which is performing better on the sample will lead to narrower prediction bands. Simulation results not reported in this paper indicate that Algorithm~\ref{alg:predbands} performs well in finite samples even for moderate sample sizes.

\section{Simulations}
\label{s:sim}

\subsection{General setting}
\label{se:simsetting}
To analyze the finite sample properties of the new prediction method, a comparative simulation study was conducted. The proposed method was tested on a number of functional time series, namely first- and second-order FAR processes, first-order FMA processes and FARMA processes of order (1,2). In each simulation run, $n=200$ (or $1000$) observations were generated of which the first $m=180$ (or $900$) were used for parameter estimation as well as order and dimension selection with the fFPE($p,d$) criterion \eqref{eq:fFPE}. On the remaining $20$ (or $100$) observations one-step ahead predictions and the corresponding squared prediction errors were computed. From these mean (MSE), median (medSE) and standard deviation (SD) were calculated. If not otherwise mentioned, this procedure was repeated $N=100$ times. More details and a summary of the results are given in Sections~\ref{se:scalar}--\ref{se:farma}.

Since in simulations one can only work in finite dimensions, the setting consisted of $D$ Fourier basis functions $v_1,\ldots,v_{D}$ on the unit interval $[0,1]$, which together determine the (finite-dimensional) space  $H=\mathrm{sp}\{v_1,\ldots,v_{D}\}$. Note that an arbitrary element $x\in H$ has the representation $x(t)=\sum_{\ell=1}^{D}c_\ell v_{\ell}(t)$ with coefficients $\mathbf{c}=(c_1,\ldots,c_{D})^\prime$. If $\Psi\colon H\to H$ is a linear operator, then
\begin{align*}
\Psi(x)&=\sum_{\ell=1}^{D}c_\ell\Psi(v_{\ell})
=\sum_{\ell=1}^{D}\sum_{\ell^\prime=1}^{D}c_\ell\langle \Psi(v_\ell),v_{\ell^\prime}\rangle v_{\ell^\prime}
=(\boldsymbol{\Psi} \mathbf{c})^\prime \mathbf{v},
\end{align*}
where $\boldsymbol{\Psi}$ is the matrix whose $\ell$-th column and $\ell^\prime$-th row is $\langle \Psi(v_\ell),v_{\ell^\prime}\rangle$, and $\mathbf{v}=(v_1,\ldots,v_{D})^\prime$ is the vector of basis functions. 
The linear operators needed to simulate the functional time series of interest can thus be represented by a $D\times D$ matrix that acts on the coefficients in the basis function representation of the curves. The corresponding innovations were generated according to
\begin{align}
\varepsilon_k(t)= \sum_{\ell=1}^{D} A_{k,\ell} v_{\ell}(t)
\label{eq:Xrepr},
\end{align}
where $A_{k,\ell}$ are i.i.d.\ normal random variables with mean zero and standard deviations $\sigma_{\ell}$ that will be specified below.

\subsection{Comparison with scalar prediction}
\label{se:scalar}

As mentioned in the introduction, a special case of the proposed method was considered by \cite{hu:07} and \cite{hs:09}. Motivated by the fact that PCA score vectors have uncorrelated components, these authors have proposed to predict the scores individually as univariate time series. This will be referred to as the {\em scalar method}, in contrast to the {\em vector method}\/ promoted in this paper. The scalar method is fast and works well as long as the cross-spectra related to the score vectors are close to zero. However, in general the score vectors have non-diagonal autocorrelations. Then, scalar models are not theoretically justified. To explore the effect of neglecting cross-sectional dependence, FAR(1) time series of length $n=200$ were generated as described above.
For the purpose of demonstration $D=3$ and $\sigma_1=\sigma_2=\sigma_3=1$ were chosen. Two autocovariance operators $\Psi^{(1)}$ and $\Psi^{(2)}$ with corresponding matrices
$$
\boldsymbol{\Psi}^{(1)}=
\begin{pmatrix}
-0.05 & -0.23 & \,\,\,\,\,0.76 \\ 
 \,\,\,\,\,0.80 & -0.05 & \,\,\,\,\,0.04 \\ 
 \,\,\,\,\,0.04 & \,\,\,\,\,0.76 & \,\,\,\,\,0.23 
\end{pmatrix}
\qquad\text{and}\qquad
\boldsymbol{\Psi}^{(2)}=0.8
\begin{pmatrix}
1 & \,\,\,\,\,0 & \,\,\,\,\,0 \\ 
 0 &\,\,\,\,\, 1 &\,\,\,\,\, 0 \\ 
 0 &\,\,\,\,\, 0 & \,\,\,\,\,1 
\end{pmatrix},
$$ were tested. Both matrices are orthogonal with norm $0.8$. In these simple settings it is easy to compute the population autocorrelation function (ACF) of the 3-dimensional FPCA score vectors. The ACF related to the score sequences of the process generated by $\Psi^{(1)}$ is displayed in Figure~\ref{fig:ACF}. It shows that two scores are uncorrelated at lag zero and that there is almost no temporal correlation in the individual score sequences. However, at lags greater than $1$ there is considerable dependence in the cross-correlations between the first and the third score sequence.
\begin{figure}
\begin{center}
\includegraphics[width=10cm]{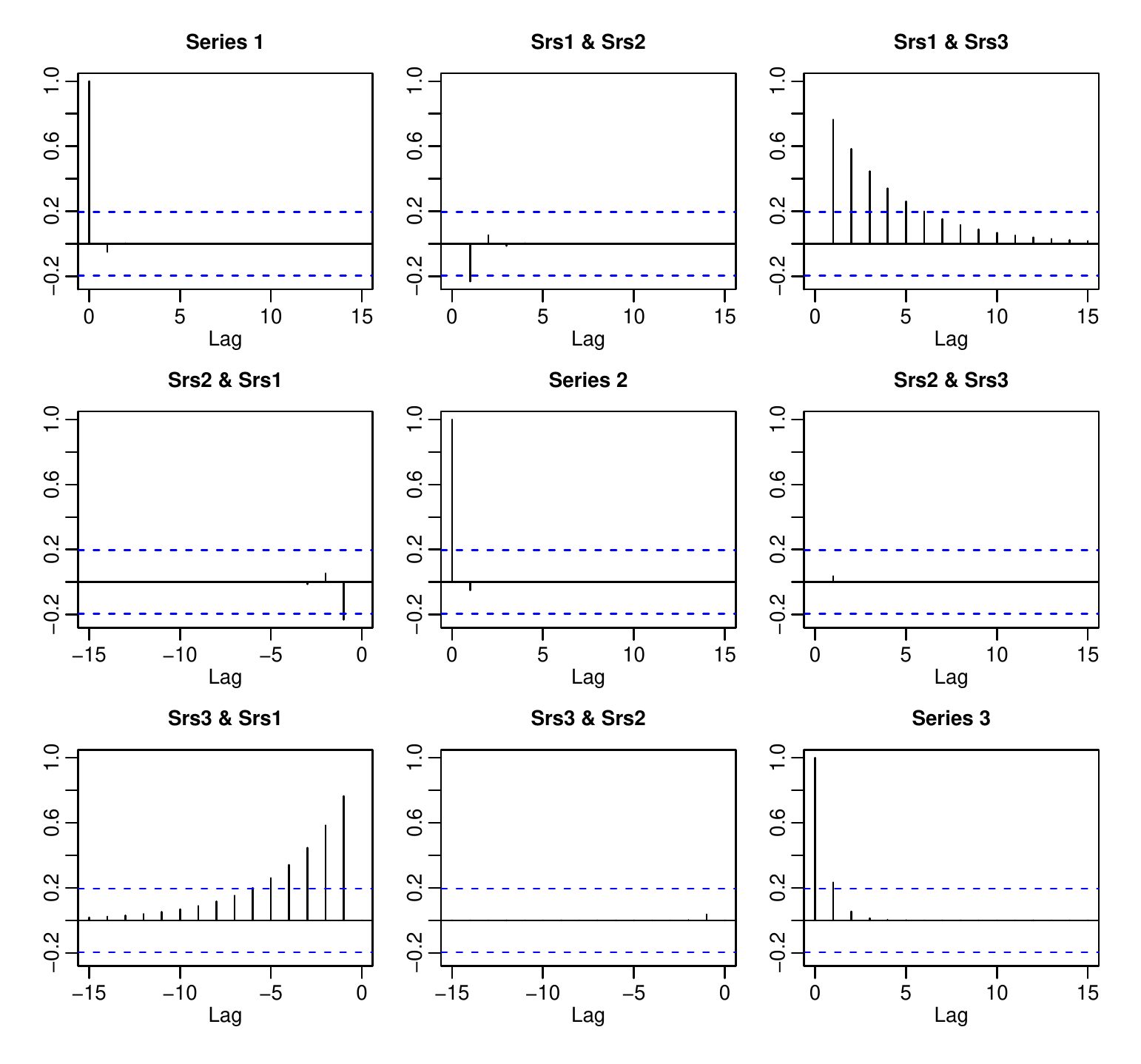}\hfill
\caption{Autocorrelation function for the scores related to the sequence generated from operator~$\Psi^{(1)}$.}
\label{fig:ACF}
\end{center}
\end{figure}
The analogous plot for $\Psi^{(2)}$ would reveal a contrary behavior: while the autocorrelations of the individual score sequences decay slowly, cross-correlations are zero at all lags.

Given these observations, it is expected that the scalar method will do very well in forecasting the scores when data are generated by operator $\Psi^{(2)}$, while it should be not competitive with the vector method if $\Psi^{(1)}$ is used. This conjecture is confirmed in Figure~\ref{fig:ratios} which shows histograms of the ratios 
\begin{equation}\label{eq:ri}
r_i=\frac{\text{MSE vector method}}{\text{MSE scalar method}},\qquad i=1,\ldots,1000,
\end{equation}
obtained from 1000 simulation runs. The grey histogram refers to the time series generated by $\Psi^{(2)}$. It indicates that the scalar method is a bit favorable, as the ratios tend to be slightly larger than one. Contrary to this, a clear superiority of the vector method can be seen when data stem from the sequence generated by $\Psi^{(1)}$. In a majority of the cases, the MSE resulting from the vector methods is less than half as large as the corresponding MSE obtained by the scalar method.  It should also be mentioned that $p$ and $d$ where estimated for the proposed method, while they were fixed at the true values $p=1$ and $d=3$ for the scalar predictions.

\begin{figure}
\begin{center}
\includegraphics[width=.5\textwidth]{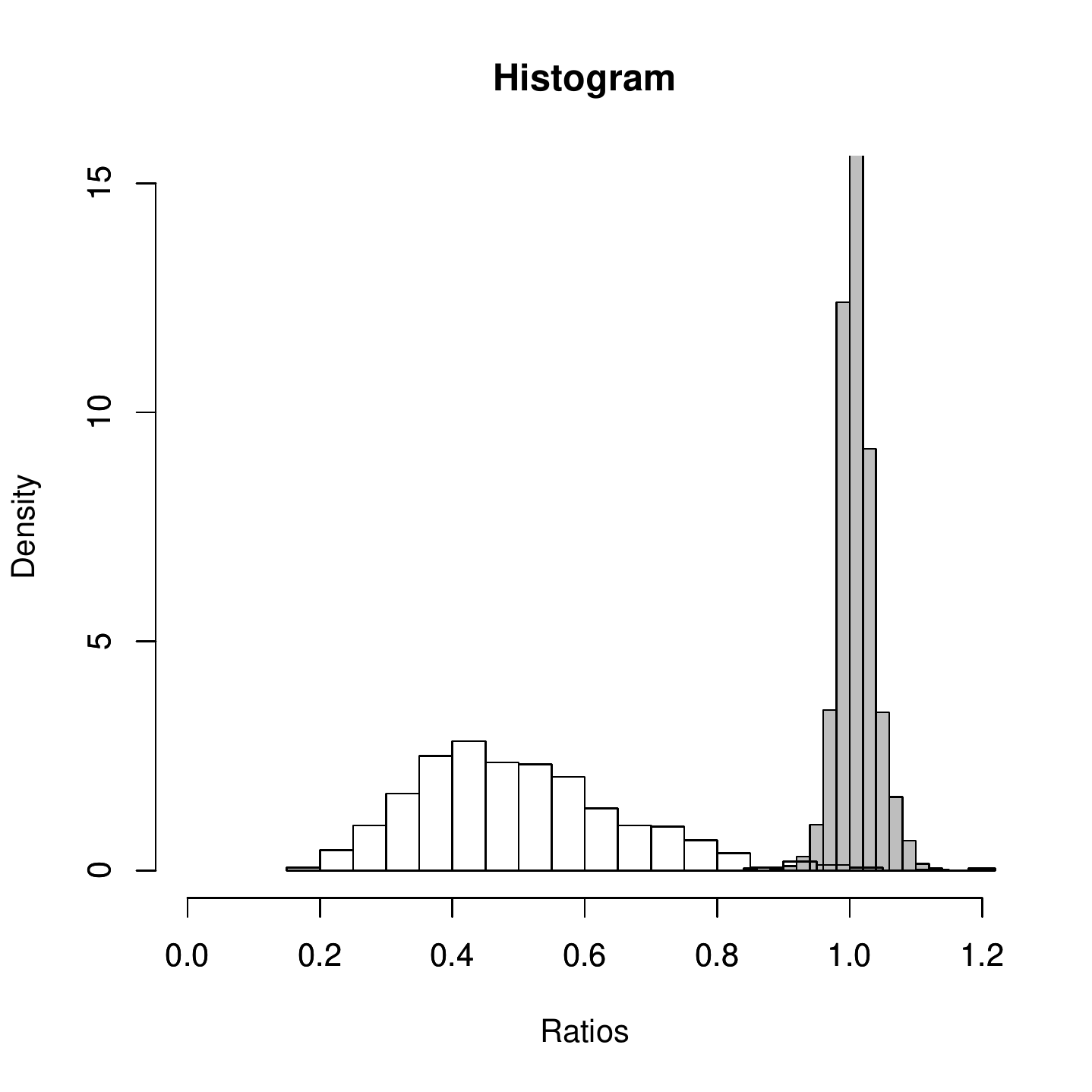}\hfill
\caption{Histogram of the ratios $r_i$ in \eqref{eq:ri} for the FAR(1) processes given by the operators $\Psi^{(1)}$ (white) and $\Psi^{(2)}$ (grey).}
\label{fig:ratios}
\end{center}
\end{figure}

\subsection{Comparison with standard functional prediction}
\label{se:results}

In this section the proposed prediction is compared on FAR(2) processes $Y_k=\Psi_1Y_{k-1}+\Psi_2Y_{k-2}+\varepsilon_k$ to the standard predicton of \cite{b00}. For the latter, the multiple testing procedure of \cite{kr13} was utilized to determine the order $p$ of the FAR model to be fitted. Following these authors, $d$ was chosen as the smallest integer such that the first $d$ principal components explain at least $80\%$ of the variance of the data. To ensure that the multiple testing procedure keeps an overall asymptotic level of 10\%, the levels in three subtests (so testing up to a maximal order $p=3$)  were chosen to be 5\%, 3\% and 2\%, respectively. For ease of reference this method will be referred to as the BKR method. Owing to the results of Section~\ref{s:ffar}, both methods are expected to yield similar results if the order $p$ was known and if the same dimension $d$ was chosen for the two predictors. 

The operators were generated such that $\Psi_1=\kappa_1\,\Psi$ and $\Psi_2=\kappa_2\,\Psi$ with $|\kappa_1|+|\kappa_2|<1$ to ensure stationarity. The case $\kappa_2=0$ yields the FAR(1) process. The operator $\Psi$ was chosen at random. More precisely, choosing $D=21$, a $D\times D$ matrix of independent, zero-mean normal random variables with corresponding standard deviations $\sigma_{\ell\ell^\prime}$ was generated. This matrix was then scaled so that the resulting matrix $\boldsymbol{\Psi}$ has induced norm equal to 1. In every iteration of the simulation runs $\boldsymbol{\Psi}$ was newly generated. Two types of standard deviations for the innovations in \eqref{eq:Xrepr} were chosen, namely
\[
\text{($\sigma$1)}\quad \boldsymbol{\sigma}_1^\prime=(\ell^{-1}\colon \ell=1,\ldots,D)
\qquad\text{and}\qquad
\text{($\sigma$2)}\quad \boldsymbol{\sigma}_2^\prime=(1.2^{-\ell}\colon \ell=1,\ldots,D).
\]
Note that if $\Psi\colon L^2\to L^2$, then $\langle \Psi(v_\ell),v_{\ell^\prime}\rangle\to 0$ if $\ell\to\infty$ or $\ell'\to\infty$ by the Riemann-Lebesgue lemma. This will be reflected in the corresponding matrices by choosing $\sigma_{\ell\ell^\prime}$ as a decaying sequence in $\ell$ and $\ell^\prime$. In particular we have chosen $((\sigma_{\ell\ell^\prime}))=\boldsymbol{\sigma}_1\boldsymbol{\sigma}_1^\prime$ for setting ($\sigma$1) and $((\sigma_{\ell\ell^\prime}))=\boldsymbol{\sigma}_2\boldsymbol{\sigma}_2^\prime$ for setting ($\sigma$2).

\begin{table}
\begin{center}
\begin{tabular}{ll| lllll | lllll}
\toprule
\multicolumn{2}{l}{}&
\multicolumn{5}{c|}{($\sigma$1)}&\multicolumn{5}{c}{($\sigma$2)} \\
\cmidrule(r){3-12}
$\kappa_1$ & $\kappa_2$ & \text{fFPE} & $\text{MSE}_a$ & $\text{MSE}_b$ & $\text{PVE}_a$ & $\text{PVE}_b$ & \text{fFPE} & $\text{MSE}_a$ & $\text{MSE}_b$ & $\text{PVE}_a$ & $\text{PVE}_b$\\
\cmidrule(r){1-12}
$0.2$ & $0.0$ & 2.29 & 2.32 & 2.31 & 0.40 & 0.83 & 1.61 & 1.59 & 1.58 & 0.71 & 0.83\\
 &  & 2.29 & 2.31 & 2.31 & 0.72 & 0.84 & 1.61 & 1.59 & 1.59 & 0.81 & 0.82\\
\cmidrule(r){1-12}
0.8 & 0.0 & 2.37 & 2.37 & 2.47 & 0.90 & 0.83 & 1.64 & 1.71 & 1.81 & 0.89 & 0.85\\
 & & 2.30 & 2.29 & 2.37 & 0.97 & 0.83 & 1.61 & 1.62 & 1.70 & 0.95 & 0.85\\
\cmidrule(r){1-12}
0.4 & 0.4 & 2.38 & 2.40 & 2.43 & 0.73 & 0.83 & 1.67 & 1.65 & 1.69 & 0.84 & 0.84\\
 & & 2.31 & 2.33 & 2.36 & 0.92 & 0.83 & 1.63 & 1.64 & 1.71 & 0.90 & 0.84\\
\cmidrule(r){1-12}
0.0 & 0.8 & 2.42 & 2.48 & 2.94 & 0.83 & 0.83 & 1.66 & 1.72 & 2.28 & 0.87 & 0.85\\
 & & 2.32 & 2.34 & 2.94 & 0.95 & 0.83 & 1.63 & 1.62 & 2.27 & 0.93 & 0.86\\
\bottomrule
\end{tabular}
\end{center}\label{tab:results}
\caption{Functional final prediction error (fFPE), mean squared prediction error based on the fFPE criterion ($\text{MSE}_a$), mean squared prediction error based on BKR ($\text{MSE}_b$), and the corresponding proportions of variance explained by the chosen number of FPCs ($\text{PVE}_a$, $\text{PVE}_b$). The first row in each setting $(\kappa_1,\kappa_2)$ corresponds to $n=200$, the second row to $n=1000$.}
\end{table}

Results for four pairs of values $(\kappa_1, \kappa_2)$ are shown in Table~\ref{tab:results}. The numbers are averages from 100 iterations of the simulation setting explained in Section~\ref{se:simsetting}. Recall that $10\%$ of the data was used in each simulation run to compute out-of-sample predictions. This means that the MSE's are based on 2,000 forecasts when $n=200$ and 10,000 forecasts when $n=1,000$. The quantity $\text{MSE}_a$ refers to the MSE produced by the proposed method and $\text{MSE}_b$ to the MSE obtained from the BKR method.  Similarly, $\text{PVE}_a$ and $\text{PVE}_b$ give the respective averages of the proportions of variance explained by $d$ principal components, where $d$ is the chosen dimension of the predictor. In summary, the following was found:
\begin{itemize}
\item The proposed approach had slight advantages over BKR in almost all considered settings. For $\kappa_1=0$ and $\kappa_2=0.8$, the BKR method almost always failed to choose the correct order $p$ (see Table~\ref{tab:results2}). In this case $\text{MSE}_b$ was about 30\%--40\% larger than $\text{MSE}_a$. 
\item With increasing sample size $\text{MSE}_a$ decreases and approaches the value of the fFPE. The latter is an estimate for the minimal possible MSE. Contrary to the BKR method, the dimension parameter $d$ chosen by fFPE grows with increasing sample size. This is visualized in Figure~\ref{fig:dimensions}.
\item When both methods choose the correct order $p$, $\text{MSE}_a$ still had a tendency to be smaller than $\text{MSE}_b$. This may arguably be due to the fact that a data driven criterion was applied to optimally select the dimension parameter $d$. It can also be seen that the mean squared prediction errors are relatively robust with respect to the choice of $d$ but quite sensitive to the choice of $p$. In particular, underestimating $p$ can lead to a non-negligible increase of MSE.
\item We have also experimented with $D=51$. The conclusions remain very similar.
\end{itemize}

\begin{figure}
\begin{center}
\includegraphics[width=.5\textwidth]{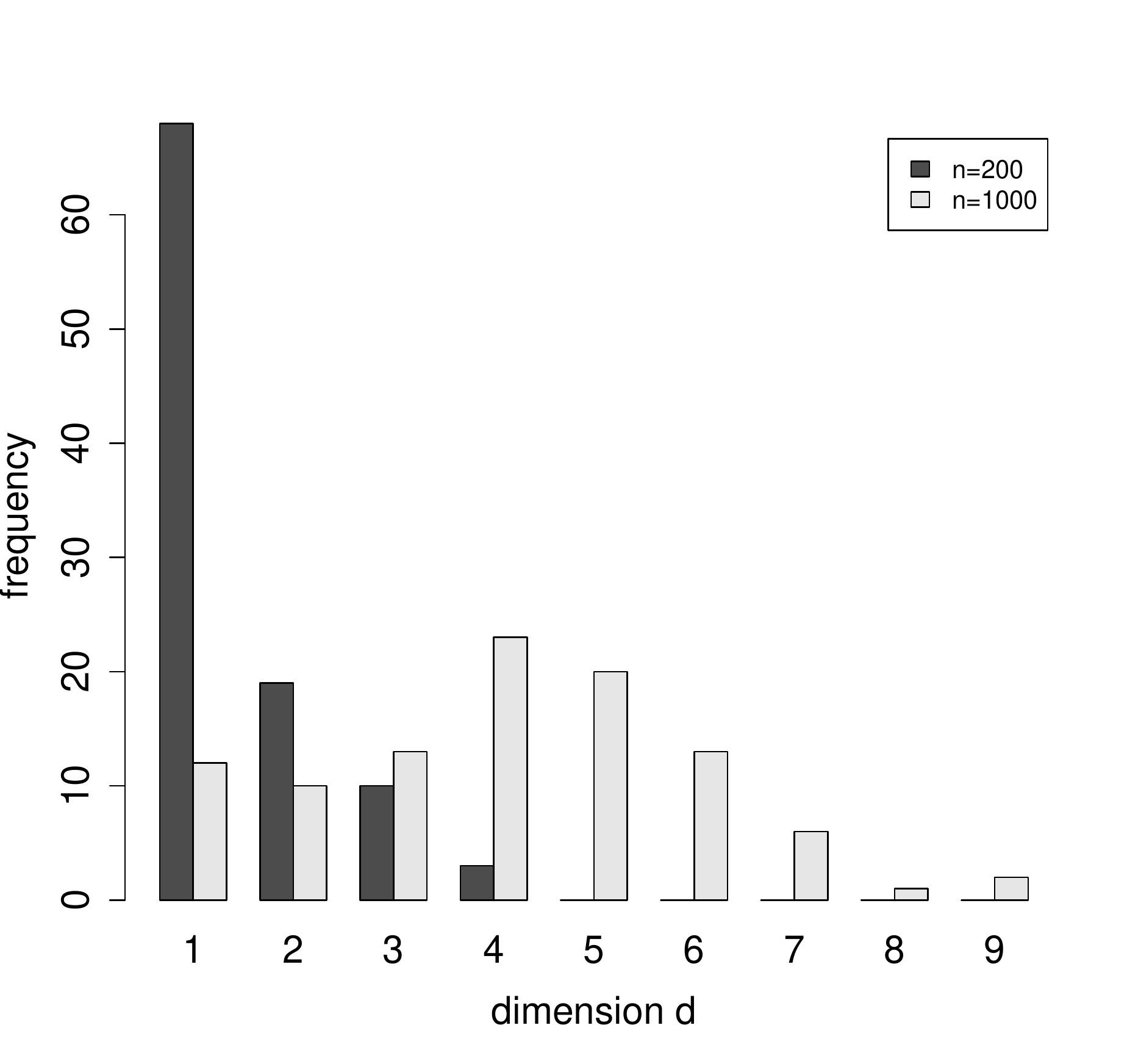}
\caption{Frequencies of the dimensions $d$ chosen by fFPE in 100 simulation runs under setting ($\sigma 1$) and $(\kappa_1,\kappa_2)=(0.2,0.0)$.}
\end{center}
\label{fig:dimensions}
\end{figure}

\begin{table}
\begin{center}
\begin{tabular}{ll| rrrr | rrrr}
\toprule
\multicolumn{2}{l}{}&
\multicolumn{4}{c|}{$n=200$}&\multicolumn{4}{c}{$n=1000$} \\
\cmidrule(r){3-10}
$\kappa_1$ & $\kappa_2$ & $p=0$ & $p=1$ & $p=2$ & $p=3$ & $p=0$ & $p=1$ & $p=2$ & $p=3$\\
\cmidrule(r){1-10}
$0.2$ & $0.0$ & 40 & {\bf 48} & 8 & 4 & 2 & {\bf 94} & 3 & 1\\
 &  & 48 & {\bf 51} & 1 & 0 & 0 & {\bf 98} & 2 & 0\\
\cmidrule(r){1-10}
$0.8$ & $0.0$ & 0 & {\bf 97} & 3 & 0 & 0 & {\bf 100} & 0 & 0\\
 &  & 0 & {\bf 95} & 5 & 0 & 0 & {\bf 81} & 17 & 2\\
\cmidrule(r){1-10}
$0.4$ & $0.4$ & 1 & 3 & {\bf 90} & 6 & 0 & 0 & {\bf 99} & 1\\
 &  & 3 & 3 & {\bf 94} & 0 & 0 & 0 & {\bf 95} & 5\\
\cmidrule(r){1-10}
$0.0$ & $0.8$ & 0 & 0 & {\bf 95} & 5 & 0 & 0 & {\bf 99} & 1\\
 &  & 94 & 0 & {\bf 5} & 1 & 93 & 0 & {\bf 7} & 0\\
\bottomrule
\end{tabular}
\end{center}\label{tab:results2}
\caption{Selected order for different choices of $\kappa_1$ and $\kappa_2$ from 100 iterations under setting ($\sigma$1). For each choice the top (bottom) row represents the order obtained via fFPE (BKR). The number of correctly selected orders is shown in bold.}
\end{table}

\subsection{Beyond functional autoregressions}
\label{se:farma}

To test the proposed procedure also for non-autoregressive functional time series, it was applied to the functional FMA(2) and FARMA(1,2) processes respectively given by the equations
\begin{align}
Y_k&=\varepsilon_k+\Theta\varepsilon_{k-2}, 
\label{fma} \\
Y_k&=\Psi_1Y_{k-1}+\varepsilon_k+\Theta_1\varepsilon_{k-1}+\Theta_2\varepsilon_{k-2},
\label{farma}
\end{align}
with operators $\Theta=.8\Psi$, $\Psi_1=.1\Psi$, $\Theta_1=.1\Psi$ and $\Theta_2=.9\Psi$ randomly generated as above. Both the fFPE-based proposed procedure and the BKR method were applied to time series of length $n=1000$. Since a fitting of long autoregressions is expected the maximal order was set to be 10. The rejection levels for the individual tests of the BKR method were set to achieve an overall level of approximately 10\%. The simulation results are displayed in Table 6.3.
The conclusions of the previous section still hold true. In particular, MSE reductions of 15\%--25\% are seen, with the reduction being slightly greater for the FARMA(1,2) process. The proposed method approximates the given time series structure generally with longer FAR processes with average orders (taken over 100 simulation runs) between $p=4$ and $p=5$ in all four cases. On the other hand, the BKR method largely fails to make adjustments and selects $p=0$ more than 90\% of the time.

\begin{table}
\begin{center}
\begin{tabular}{l| lllll | lllll}
\toprule
\multicolumn{1}{l}{}&
\multicolumn{5}{c|}{($\sigma$1)}&\multicolumn{5}{c}{($\sigma$2)} \\
\cmidrule(r){3-11}
 & \text{fFPE} & $\text{MSE}_a$ & $\text{MSE}_b$ & $\text{PVE}_a$ & $\text{PVE}_b$ & \text{fFPE} & $\text{MSE}_a$ & $\text{MSE}_b$ & $\text{PVE}_a$ & $\text{PVE}_b$\\
\cmidrule(r){1-11}
FMA(2) & 2.37 & 2.39 & 2.80 & 0.92 & 0.83 & 1.65 & 1.64 & 2.12 & 0.90 & 0.85 \\
FARMA(1,2) & 2.38 & 2.42 & 2.96 & 0.82 & 0.83 & 1.65 & 1.67 & 2.24 & 0.90 & 0.85 \\
\bottomrule
\end{tabular}
\end{center}\label{tab:results3}
\caption{As in Table 6.1, but for the functional time series in \eqref{fma} and \eqref{farma} for $n=1000$.}
\end{table}

\section{Predicting particulate matter concentrations}
\label{s:appl}

In order to demonstrate its practical usefulness, the new methodology has been applied to environmental data on pollution concentrations. The observations are half-hourly measurements of the concentration (measured in $\mu gm^{-3}$) of particulate matter with an aerodynamic diameter of less than $10\mu m$, abbreviated PM10, in ambient air taken in Graz, Austria from October 1, 2010 until March 31, 2011. Since epidemiological and toxicological studies have pointed to negative health effects, European Union (EU) regulation sets pollution standards for the level of the concentration. Policy makers have to ensure compliance with these EU rules and need reliable statistical tools to determine, and justify to the public, appropriate measures such as partial traffic regulation (see \cite{shp08}). Accurate predictions are therefore paramount for well informed decision making.

Functional data were obtained as follows. In a first step, very few missing intra-day data points were replaced through linear interpolation. A square-root transformation was then applied to the data in order to stabilize the variance. A visual inspection of the data revealed several extreme outliers around New Years Eve known to be caused by firework activities. The corresponding week was removed from the sample. The data was then centered and adjusted for weekly seasonality by subtracting from each observation the corresponding weekday average. This is done because PM10 concentration levels are significantly different for the weekends when traffic volume is much lower. 
In the next step, 48 observations for a given day were combined into vectors and transformed into functional data using ten cubic $B$-spline basis functions and least squares fitting. The {\tt fda} package available for the statistical software {\tt R} was applied here. Eventually, 175 daily functional observations, say, $Y_1,\ldots,Y_{175}$, were obtained, roughly representing one winter season for which pollution levels are known to be high. They are displayed in the upper left panel of Figure \ref{fig:fd}. Shown in this figure are also the effect of the first three FPCs on the mean curve. Following \cite{rs05}, a multiple (using the factor .5) of the $\ell$th empirical eigenfunction $\hat v_\ell$ was added to and subtracted from the overall estimated mean curve $\hat \mu$ to study the effect of large (small) first, second or third FPC score. Notice that 
\[
Y_k\approx\hat\mu+y_{k1}^e\hat v_1+y_{k2}^e\hat v_2+y_{k3}^e\hat v_3,
\qquad k=1,\ldots,175,
\]
where $y_{k\ell}^e=\langle Y_k,\hat v_\ell\rangle$ are the empirical FPC scores. These combine to explain about 89\% of variability in the data. The upper right panel of Figure \tb{\ref{fig:fd}} indicates that if the first FPC score $y_{k1}^e$, which explains about 72\% of the variation, is large (small), then a positive (negative) shift of the mean occurs. The second and third FPCs are contrasts, explaining respectively 10\% and 7\% of variation, with the second FPC describing an intraday trend and the third FPC indicating whether the diurnal peaks are more or less pronounced (see the lower panel of Figure~\tb{\ref{fig:fd}}).

\begin{figure}
\begin{center}
\includegraphics[width=7cm]{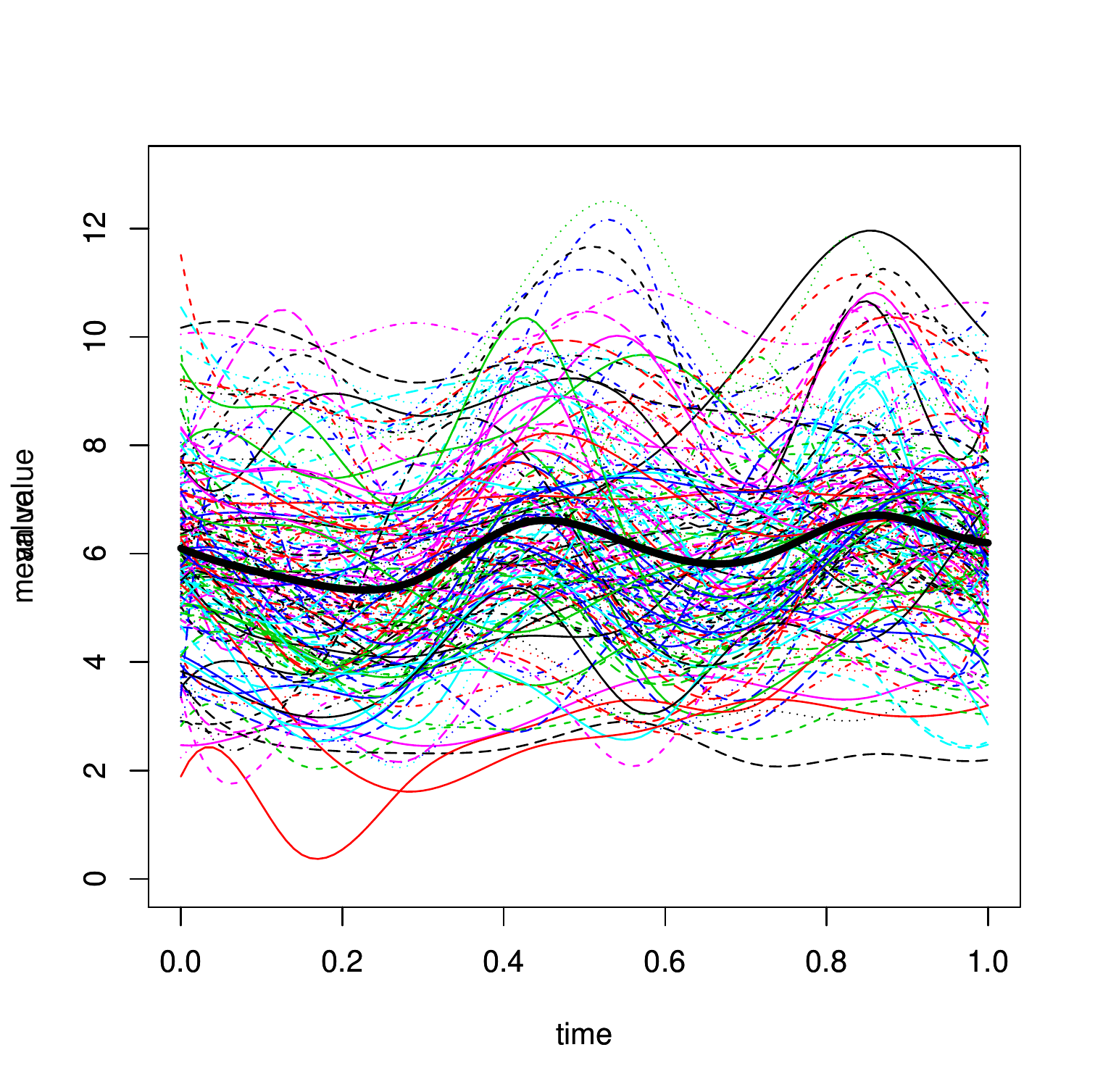} \quad
\includegraphics[width=7cm]{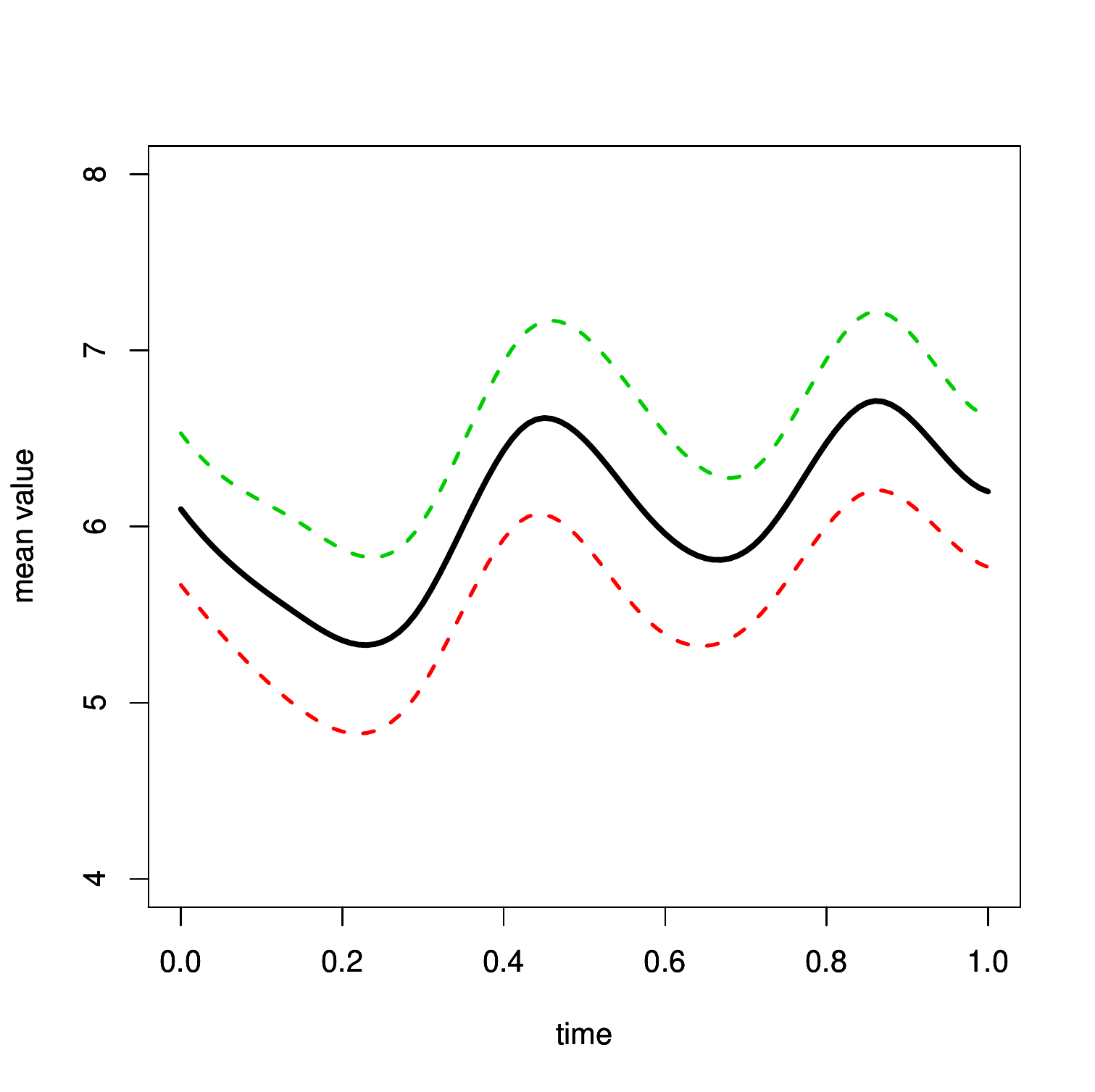}\\
\includegraphics[width=7cm]{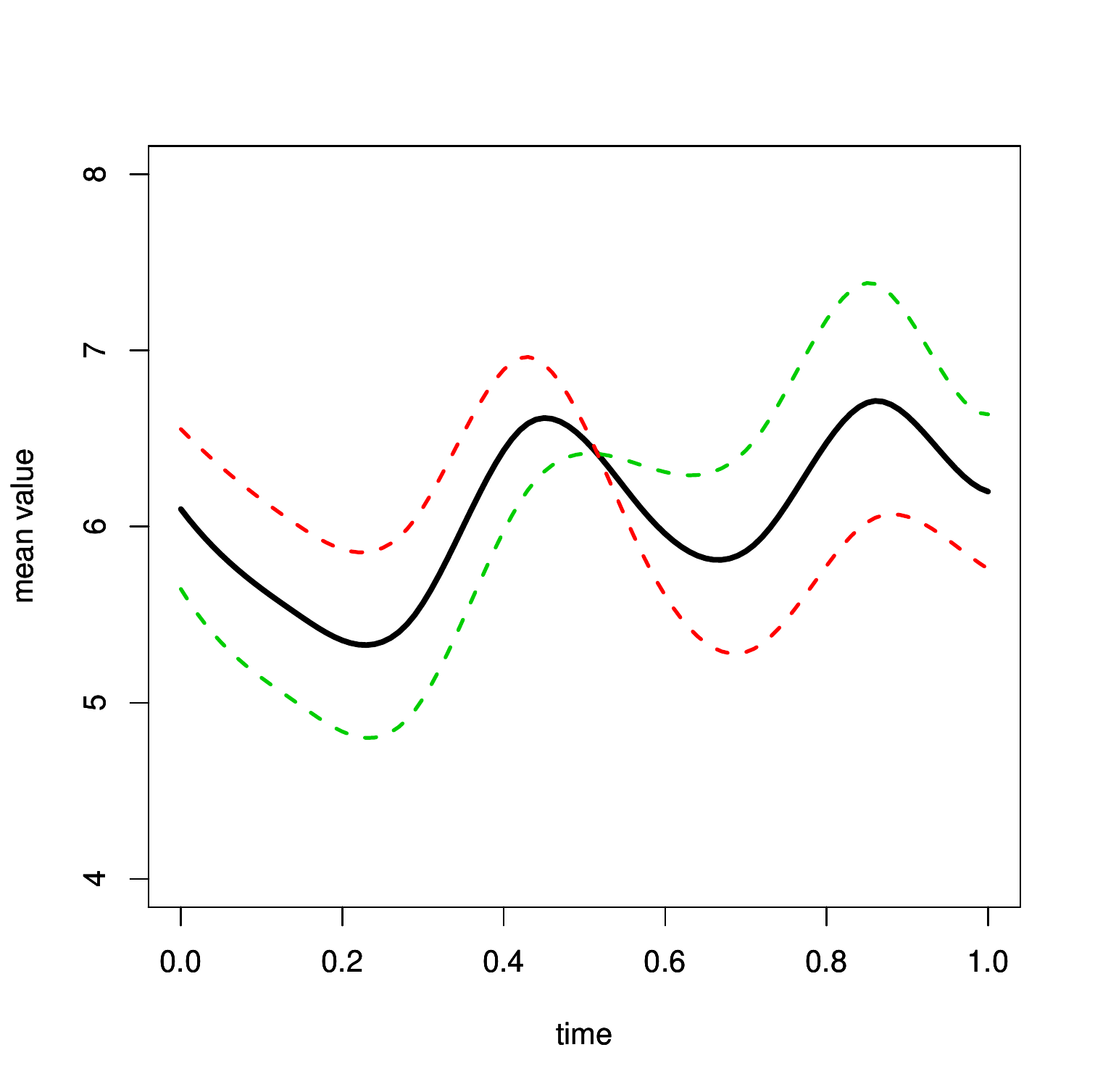} \quad
\includegraphics[width=7cm]{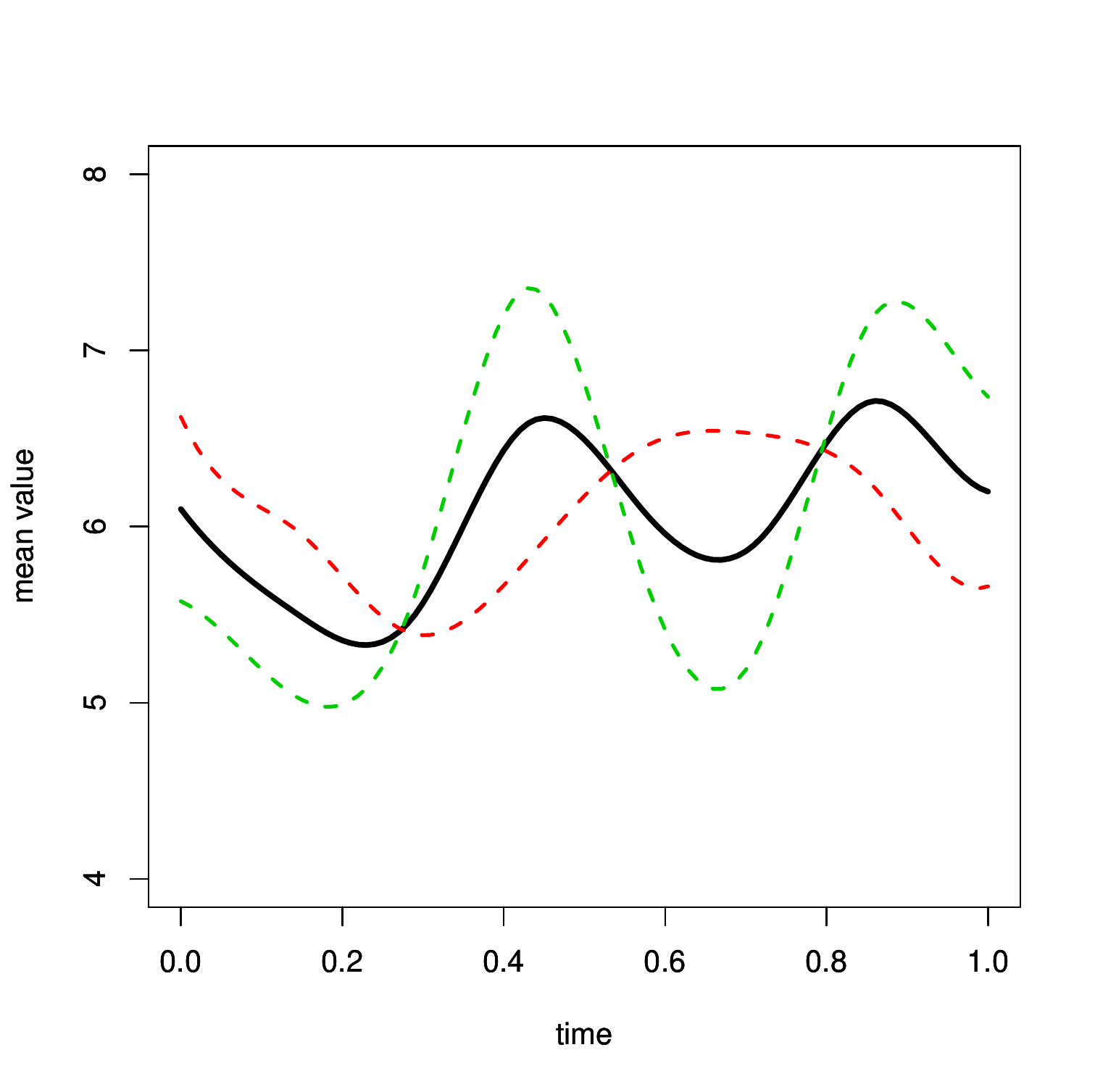}
\caption{Square-root transformed {\tt PM10} observations with fat overall mean curve (upper left panel), effect of the first FPC (upper right panel), effect of the second FPC (lower left panel),  and effect of the third FPC (lower right panel).}
\label{fig:fd}
\end{center}
\end{figure}

For the comparison of the quality of the competing prediction methods, the following was adopted. First, five blocks of consecutive functional observations $Y_{k+1},\ldots,Y_{k+100}$ were chosen, with $k=0,15,30,45,60$. Each block was then used to estimate parameters and fit a certain model. Then, out-of-sample predictions for the values of $Y_{k+100+\ell}$, $\ell=1,\ldots,15$, were made. Finally, the resulting squared prediction errors
\[
\int_0^1\big[Y_{k+100+\ell}(t)-{\tt Pr}(Y_{k+100+\ell})(t)\big]^2dt, \qquad \ell=1,\ldots,15,
\]
were computed, where ${\tt Pr}$ can stand for any of the prediction methods tested. From the $15$ resulting numbers, median ($\mathrm{MED}_{\tt Pr}$) and mean ($\mathrm{MSE}_{\tt Pr}$) were computed.  Results are reported in Table~\ref{tab:pm10}. With the exception of the first period ($k=0$), MSE and MED obtained from the new method are significantly smaller than the ones resulting from the BKR method. In fact, during the second and third period ($k=15$ and $k=30$) prediction errors are on average only about half as big as the ones obtained via BKR. This may arguably be due to an underestimation of the order by BKR method (as evidenced in the simulations).
\begin{table}[ht]
\begin{center}
\begin{tabular}{r|rrr|rrr|rrr|rrr}
  \hline
$k$& $p_a$ & $p_b$ &$ p_c$ & $d_a$ &  $d_b$ & $d_c$ & $\text{MSE}_a$ & $\text{MSE}_b$ &  $\text{MSE}_c$ & $\text{MED}_a$ & $\text{MED}_b$   & $\text{MED}_c$ \\ 
  \hline
 0 & 1 & 1 & 2 & 3 & 3 & 3 & 1.33 & {\bf 1.28} &         1.32 & 1.28 & 1.23 & {\bf 0.88} \\ 
15 &   3 & 1 & 3 & 3 & 3 & 3 & 2.69 &         5.23 & {\bf 2.50} & 2.38 & 5.34 & {\bf 1.45} \\ 
30 &   4 & 1 & 3 & 3 & 2 & 3 & 2.05 &         4.05 & {\bf 1.93} & 1.33 & 2.56 & {\bf 1.26} \\ 
45&   3 & 1 & 3 & 3 & 2 & 3 & 2.25 &         2.44 & {\bf 1.83} & 1.34 & 1.67 & {\bf 1.14 }\\ 
60&   2 & 1 & 1 & 3 & 2 & 5 & 1.22 &         1.82 & {\bf 1.05} & 1.12 & 1.60 & {\bf 0.89} \\ 
   \hline
\end{tabular}
\caption{Comparison of the 3 prediction methods. Subscript $a$ ($b$, $c$) corresponds to method FPE (BKR, FPEX). We report mean (MSE) and median (MED) of the 15 predictions from each block as well as the values of $d$ and $p$ chosen by the respective methods.}
\label{tab:pm10}
\end{center}
\end{table}

PM10 concentrations are known to be high at locations suffering from severe temperature inversions such as the basin areas of the Alps. Following \cite{shp08}, temperature difference between Graz ($350m$ above sea level) and Kalkleiten ($710m$ above sea level) can be utilized to model this phenomenon. Temperature inversion is often seen as a key factor influencing PM10 concentrations because temperatures increasing with sea level result in a sagging exchange of air, thereby yielding a higher pollutant load at the lower elevation. 

To illustrate functional prediction with covariates, temperature difference curves of Graz and Kalkleiten have been included as a dependent variable. For the overall sample, the first two FPCs of the temperature difference curves describe about $92\%$ of the variance. Hence, FPCA was used for covariate dimension reduction, leading to the inclusion of a two-dimensional exogenous regressor (which is almost equivalent to the true regressor curve) in the second step of Algorithm~\ref{alg:3}. Then a $d$-variate VARX($p$) model was fit with $d$ and $p$ selected by the functional final prediction error-type criterion adjusted for the covariate:
\begin{equation}
\label{eq:fFPEX}
\mathrm{fFPE}(p,d)=\frac{n+pd+r}{n-pd-r}\,\mathrm{tr}(\hat\Sigma_{\boldsymbol{Z}})+\sum_{\ell>d}\hat\lambda_\ell.
\end{equation}
Here $r$ is the dimension of the regressor vector (in the present case, $r=2$) and $\hat\Sigma_{\boldsymbol{Z}}$ is the covariance matrix of the residuals when a model of order $p$ and dimension $d$ is fit. The latter method is referred to as FPEX. The corresponding prediction results are summarized in Table~\ref{tab:pm10}. A further significant improvement in the mean and median square (out-of-sample) prediction error can be observed.

\section{Conclusions}
\label{s:con}

This paper proposes a new prediction methodology for functional time series that appears to be widely and easily applicable. It is based on the idea that dimension reduction with functional principal components analysis should lead to a vector-valued time series of FPC scores that can be predicted with any existing multivariate methodology, parametric and nonparametric. The multivariate prediction is then transformed to a functional prediction using a truncated Karhunen-Lo\'eve decomposition. 

The proposed methodology seems to be advantageous for several reasons. Among them is its intuitive appeal, made rigorous for the predominant FAR($p$) case, but also its ease of application as existing software packages can be readily used, even by non-experts. It is in particular straightforward to extend the procedure to include exogenous covariates into the prediction algorithm. Simulations and an application to pollution data suggest that the proposed method leads to predictions that are always competitive with and often superior to the benchmark predictions in the field. 

It is hoped that the present article can spawn interest among researchers working in the active area of functional time series.

\appendix

\section{Theoretical considerations}\label{s:comp}

It is stated in Section~\ref{s:pr} that empirical mean and covariance are $\sqrt{n}$-consistent estimators for their population counterparts for a large class of functional time series. The following lemma makes this statement precise for FAR($p$) processes. The notation of Section~\ref{ss:comp} is adopted.
\begin{lemma}\label{le:A1}
Consider the FAR($p$) model \eqref{e:farp} and suppose that Assumption~{\tt FAR} holds.  Further suppose that  $\|(\Psi^*)^{k_0}\|_\mathcal{L}<1$ for some $k_0\geq 1$. Then (i) $E[\|\hat \mu_n-\mu\|^2]=O(1/n)$. (ii) If in addition $(\varepsilon_k)$ in $L^4_H$, then $E[\|\hat C_n-C\|^2]=O(1/n)$.
\end{lemma}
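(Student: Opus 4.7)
The strategy is to verify that the FAR($p$) process $(Y_k)$ is $L^q$-$m$-approximable in the sense of \cite{hk11} (with $q=2$ for part (i) and $q=4$ under the additional moment assumption for part (ii)) and then invoke Theorems~5 and~6 of \cite{hk12} directly, which give $O(1/n)$ rates for $E[\|\hat\mu_n-\mu\|^2]$ and $E[\|\hat C_n-C\|_\mathcal{L}^2]$ for any $L^q$-$m$-approximable functional time series.

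First I would work in the state-space representation $X_k=\Psi^*(X_{k-1})+\delta_k$ on the product Hilbert space $H_p$ introduced in Section~\ref{ss:comp}. Writing any $j\in\mathbb{N}$ as $j=qk_0+r$ with $0\leq r<k_0$, submultiplicativity of $\|\cdot\|_\mathcal{L}$ together with the hypothesis $\|(\Psi^*)^{k_0}\|_\mathcal{L}<1$ yields the exponential bound
\[
\|(\Psi^*)^j\|_\mathcal{L}\leq \|\Psi^*\|_\mathcal{L}^{r}\,\|(\Psi^*)^{k_0}\|_\mathcal{L}^{q}\leq M\rho^{j}
\]
for some finite $M$ and some $\rho\in(0,1)$. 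Consequently the causal MA$(\infty)$ series $X_k=\sum_{j\geq 0}(\Psi^*)^j\delta_{k-j}$ converges in $L^2_{H_p}$, and also in $L^4_{H_p}$ whenever $(\varepsilon_k)\in L^4_H$, since $\delta_k=(\varepsilon_k,0,\ldots,0)'$ inherits the corresponding moments.

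Next, to obtain the required $m$-dependent coupling I would introduce an independent copy $\{\delta_j^{(k)}\colon j\in\mathbb{Z}\}$ of the innovation sequence and set
\[
X_k^{(m)}=\sum_{j=0}^{m-1}(\Psi^*)^j\delta_{k-j}+\sum_{j\geq m}(\Psi^*)^j\delta_{k-j}^{(k)}.
\]
The sequence $(X_k^{(m)})_k$ is strictly stationary and $m$-dependent, and projection onto the first coordinate produces an $m$-dependent approximation $Y_k^{(m)}$ of $Y_k$. Minkowski's inequality in $L^q$ combined with the exponential decay above gives
\[
\bigl(E\|Y_k-Y_k^{(m)}\|^q\bigr)^{1/q}\leq \sum_{j\geq m}\|(\Psi^*)^j\|_\mathcal{L}\cdot 2\bigl(E\|\delta_0\|_p^q\bigr)^{1/q}=O(\rho^m),
\]
which is summable in $m$. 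Thus $(Y_k)$ is $L^2$-$m$-approximable in $H$, and $L^4$-$m$-approximable under the additional assumption $(\varepsilon_k)\in L^4_H$.

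Applying Theorem~5 of \cite{hk12} to this $L^2$-$m$-approximable sequence delivers (i), and Theorem~6 of \cite{hk12} applied under $L^4$-$m$-approximability delivers (ii). The only nontrivial ingredient is Step~1, extracting the geometric bound on $\|(\Psi^*)^j\|_\mathcal{L}$ from the spectral-radius-type condition on $\Psi^*$; once this is in hand, the construction of the coupling and the application of the two theorems from \cite{hk12} are routine.
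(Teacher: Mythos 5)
Your proof is correct and takes essentially the same route as the paper's: the paper also works in the state-space form, establishes that $(X_k)$ is $L^2$- (resp.\ $L^4$-) $m$-approximable, notes that this property is inherited by the projection onto the first coordinate, and concludes from Theorems~5 and~6 of \cite{hk12}. The only difference is that the paper outsources the $m$-approximability step to Proposition~2.1 of \cite{hk11} and Theorem~3.1 of \cite{b00}, whereas you derive the geometric bound on $\|(\Psi^*)^j\|_\mathcal{L}$ and the $m$-dependent coupling explicitly.
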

\begin{proof}
If follows from Proposition~2.1 in \cite{hk11} and Theorem~3.1 in \cite{b00} that $(X_k)$ is $L^2$-$m$-approximable under (i) and $L^4$-$m$-approximable under (ii). $L^p$-$m$-approximability is inherited by the projection $\pi(X_k)=X_k^{(1)}=Y_k$. Now the proof follows from Theorems~5 and 6 in \cite{hk12}.
\end{proof}

\subsection{The VAR structure}\label{ss:compv_f}

In case of a VAR(1), Step 2.\ of Algorithm \ref{alg:1} can be performed with least squares. To explicitly calculate $\hat{\bY}^e_{n+1}$, apply $\langle \cdot,\hv_\ell\rangle$ to both sides of $Y_k=\Psi(Y_{k-1})+\varepsilon_k$ to obtain
\begin{align}
\langle Y_k,\hv_\ell\rangle&=\langle \Psi(Y_{k-1}),\hv_\ell\rangle+\langle \varepsilon_k,\hv_\ell\rangle \nonumber\\
&=\sum_{\ell^\prime=1}^\infty\langle Y_{k-1},\hv_{\ell^\prime}\rangle\langle \Psi(\hv_{\ell^\prime}),\hv_\ell\rangle+\langle \varepsilon_k,\hv_\ell\rangle \nonumber\\
&=\sum_{\ell^\prime=1}^d\langle Y_{k-1},\hv_{\ell^\prime}\rangle\langle \Psi(\hv_{\ell^\prime}),\hv_\ell\rangle+\delta_{k,\ell}, \label{stack}
\end{align}
with remainder terms $\delta_{k,\ell}=d_{k,\ell}+\langle \varepsilon_k,\hv_\ell\rangle$
where
\begin{align*}
d_{k,\ell}&=\sum_{\ell^\prime=d+1}^\infty\langle Y_{k-1},\hat v_{\ell^\prime}\rangle\langle \Psi(\hat v_{\ell^\prime}),\hv_\ell\rangle,
\end{align*}
noting that $(\hv_\ell)$ can always be extended to an orthonormal basis of $L^2$. Some notation is needed. Set
$
\be_k=(\langle\varepsilon_k,v_1\rangle,\ldots,\langle\varepsilon_k,v_d\rangle)^\prime
$
and
$
\bu_k=(u_{k,1},\ldots,u_{k,d})^\prime,
$
where
$
u_{k,\ell}=\sum_{\ell^\prime>d}\langle Y_{k-1},v_{\ell^\prime}\rangle\langle
\Psi(v_{\ell^\prime}),v_\ell\rangle,
$
and let $B_d\in\mathbb{R}^{d\times d}$ be the matrix with entry
$\langle\Psi(v_\ell),v_{\ell^\prime}\rangle$ in the $\ell$th row and the $\ell^\prime$th column, $\ell,\ell^\prime=1,\ldots,d$. Let moreover $\bbeta=\mathrm{vec}(B_d^\prime)$,
$\bZ=(\bY_2^\prime,\ldots,\bY_n^\prime)^\prime$, $\bE=(\be_2^\prime,\ldots,\be_n^\prime)^\prime$, $\bU=(\bu_2^\prime,\ldots,\bu_n^\prime)^\prime$, $X_k=I_d\otimes \bY_k^\prime$ and $X=(X_1^\prime\colon\ldots\colon X_{n-1}^\prime)^\prime$.
Replacing the eigenfunctions $v_\ell$ by their sample counterparts $\hat v_\ell$, empirical versions of the above variables are denoted by $\bY_k^e$, $\bZ^e$, $X_k^e$, $X^e$, $B_d^e$ and $\bbeta_d^e$. For a vector $\mathbf{x}\in\mathbb{R}^{d^2}$, the operation $\mathrm{mat}(\mathbf{x})$
creates a $d\times d$ matrix, whose $\ell$-th column contains the elements $v_{(1-\ell)d+1},\ldots,v_{\ell d}$.
Define now $\bdelta_k=(\delta_{k,1},\ldots,\delta_{k,d})^\prime$ to arrive at the equations
\begin{equation}\label{emp_var}
\bY_k^e=B_d^e\,\bY_{k-1}^e+\bdelta_k,\qquad k=2,\ldots, n.
\end{equation}
The equations in \eqref{emp_var} formally resemble VAR(1) equations. Notice, however,
that it is a nonstandard formulation, since the errors $\bdelta_k$  are generally not centered and dependent. Furthermore, $\bdelta_k$ depends in a complex way on $\bY_{k-1}^e$, so that the errors are not uncorrelated with past observations. The coefficient matrix $B_d^e$ is also random, but fixed for fixed sample size $n$. In the sequel these effects are ignored. Utilizing some matrix algebra, \eqref{emp_var} can be written as the linear regression
\begin{equation}\label{reg}
\bZ^e=X^e\bbeta_d^e+\bDelta,
\end{equation}
where $\bDelta=(\bdelta_2^\prime,\ldots,\bdelta_n^\prime)^\prime$.
The ordinary least squares estimator is then
$
\hat{\bbeta}^e_d=({X^e}^\prime X^e)^{-1}{X^e}^\prime\bZ^e,
$
and the prediction equation
\begin{equation}\label{pred_eq}
\hat\bY^e_{n+1}=\hat B_d^e\bY_n^e=(\hat{y}_{n+1,1}^e,\ldots,\hat{y}_{n+1,d}^e)^\prime,
\end{equation}
follows directly, defining $\hat B_d^e=\mathrm{mat}\big(\hat{\bbeta}_d^e\big)^\prime$.

\subsection{Proof of Theorem~\ref{th:equiv_VAR_FAR}}\label{ss:compv_f-b}

Recall the notations introduced above equation \eqref{emp_var}. In order to prove the asymptotic equivalence between $\tilde Y_{n+1}$ in \eqref{fc_kl} and $\hat Y_{n+1}$ in \eqref{fc_standard} for the case of FAR(1) functional time series, observe first that
\[
\bigg(\frac{1}{n-1}{X^e}^\prime X^e\bigg)^{-1}=I_d\otimes\hat\Gamma^{-1},
\]
where $\hat\Gamma$ is the $d\times d$ matrix with entries $\hat\Gamma(\ell,\ell^\prime)=\frac{1}{n-1}\sum_{k=1}^{n-1}y_{k,\ell}^ey_{k,\ell^\prime}^e$ determined by the FPC scores $y_{k,\ell}^e=\langle Y_{k},\hv_\ell\rangle$, and $\otimes$ signifies the Kronecker product. With the help of \eqref{pred_eq}, the VAR(1) based predictor \eqref{fc_kl} can be written in the form
\[
\hat Y_{n+1}=
\frac{1}{n-1}\left\{\left(\mathrm{mat}\left(\big[I_d\otimes\hat\Gamma^{-1}\big]{X^e}^\prime\bZ^e\right)\right)^\prime\bY_n^e\right\}^\prime\hat{\boldsymbol{v}},
\]
with $\hat{\boldsymbol{v}}=(\hv_1,\ldots,\hv_d)^\prime$ being the vector of the first $d$ empirical eigenfunctions. On the other hand, defining the $d\times d$ matrix $\tilde\Gamma$ by the entries $\tilde\Gamma(\ell,\ell^\prime)=\frac{1}{n}\sum_{k=1}^{n}y_{k,\ell}^ey_{k,\ell^\prime}^e=\mathrm{diag}(\hat\lambda_1,\ldots,\hat\lambda_d)$, direct verification shows that \eqref{fc_standard} takes the form
\[
\tilde{Y}_{n+1}=
\frac{1}{n-1}\left\{\left(\mathrm{mat}\left(\big[I_d\otimes\tilde\Gamma^{-1}\big]{X^e}^\prime\bZ^e\right)\right)^\prime\bY_n^e\right\}^\prime\hat{\boldsymbol{v}}.
\]
The only formal difference between the two predictors under consideration is therefore in the matrices $\hat\Gamma$ and $\tilde\Gamma$. Now, for any $\ell,\ell^\prime=1,\ldots,d$,
\begin{align*}
\hat\Gamma(\ell,\ell^\prime)
&=\tilde\Gamma(\ell,\ell^\prime)+\frac{1}{n-1}\frac{1}{n}\sum_{k=1}^{n}y_{k,\ell}^ey_{k,\ell^\prime}^e
-\frac{1}{n-1}y_{n,\ell}^ey_{n,\ell^\prime}^e \\
&=\tilde\Gamma(\ell,\ell^\prime)
+\frac{1}{n-1}\left(\hat\lambda_\ell I\{\ell=\ell^\prime\}-y_{n,\ell}^ey_{n,\ell^\prime}^e\right),
\end{align*}
so that $Y_n\in L_H^2$ implies
\[
\left|\hat\Gamma(\ell,\ell^\prime)-\tilde\Gamma(\ell,\ell^\prime)\right|
\leq\frac{1}{n-1}\bigg(\frac{1}{n}\sum_{k=1}^n\|Y_k\|^2 +\|Y_n\|^2\bigg)
=O_P\bigg(\frac 1n\bigg).
\]
In the following $\|\cdot\|$ will be used for the $L^2$ norm, the Euclidean norm in $\mathbb{R}^d$ and matrix norm $\|A\|=\sup_{\|\mathbf{x}\|=1}\|A\mathbf{x}\|$, for a square matrix $A\in\mathbb{R}^{d\times d}$. Let 
\[
\Delta=\mathrm{mat}\bigg(\big[I_d\otimes\big(\hat\Gamma^{-1}-\tilde\Gamma^{-1}\big)\big]\frac{1}{n-1}{X^e}^\prime\bZ^e\bigg).
\]
The orthogonality of the $\hv_\ell$ together with Pythagoras' theorem and Bessel's inequality imply that
\begin{align*}
\|\hat{Y}_{n+1}-\tilde{Y}_{n+1}\|&
=\left\|\Delta^\prime\bY_n^e\right\|
\leq \|\Delta\|\|\bY_n^e\|
= \|\Delta\|\left(\sum_{\ell=1}^d(y_{n,\ell}^e)^2\right)^{1/2}
\leq \|\Delta\|\|Y_n\|.
\end{align*}
Define $S=\mathrm{mat}(\frac{1}{n-1}{X^e}^\prime\bZ^e)$ and notice that
$\Delta=(\hat\Gamma^{-1}-\tilde\Gamma^{-1})S$ and hence $
\|\Delta\|\leq\big\|\hat\Gamma^{-1}-\tilde\Gamma^{-1}\big\|\|S\|$.

Let $\boldsymbol{w}=(w_1,\ldots,w_d)^\prime$. Since
$S(\ell,\ell^\prime)=\frac{1}{n-1}\sum_{k=1}^{n-1}y_{k,\ell}^ey_{k+1,\ell^\prime}^e$, iterative applications of the Cauchy-Schwarz inequality yield
\begin{align*}
\|S\|^2&=\sup_{\|\boldsymbol{w}\|=1} \sum_{\ell=1}^d\bigg(\sum_{\ell^\prime=1}^d\frac{1}{n-1}\sum_{k=1}^{n-1}y_{k,\ell}^ey_{k+1,\ell^\prime}^e w_{\ell^\prime}\bigg)^2\\
&\leq \sum_{\ell=1}^d\sum_{\ell^\prime=1}^d\bigg(\frac{1}{n-1}\sum_{k=1}^{n-1}y_{k,\ell}^ey_{k+1,\ell^\prime}^e\bigg)^2\\
&\leq \sum_{\ell=1}^d\sum_{\ell^\prime=1}^d\frac{1}{n-1}\sum_{k=1}^n(y_{k,\ell}^e)^2
\frac{1}{n-1}\sum_{k=1}^n(y_{k,\ell^\prime}^e)^2\\
&\leq \bigg(\frac{1}{n-1}\sum_{k=1}^n\|Y_k\|^2\bigg)^2\\
&=O_P(1).
\end{align*}
It remains to estimate $\|\hat\Gamma^{-1}-\tilde\Gamma^{-1}\|$. The next step consists of using the fact that, for any $A,B\in \mathbb{R}^{d\times d}$, it holds that $(A+B)^{-1}=A^{-1}-A^{-1}(I+BA^{-1})^{-1}BA^{-1}$, provided all inverse matrices exist. Now choose $A=\tilde\Gamma$ and $B=\hat\Gamma-\tilde\Gamma$. Since in the given setting the time series $(Y_n)$ is stationary and ergodic, it can be deduced that $\hat\lambda_d\to\lambda_d$ with probability one. Thus $\hat\lambda_d^{-1}\|\tilde\Gamma-\hat\Gamma\|<1$ for large enough $n$,  and consequently
\begin{align*}
\big\|\hat\Gamma^{-1}-\tilde\Gamma^{-1}\big\|
&=\Big\|\tilde\Gamma^{-1}\big[I_d+(\hat\Gamma-\tilde\Gamma)
\tilde\Gamma^{-1}\big]^{-1}(\hat\Gamma-\tilde\Gamma)
\tilde\Gamma^{-1}\Big\| \\[.2cm]
&\leq\big\|\tilde\Gamma^{-1}\big\|^2\big\|\hat\Gamma-\tilde\Gamma\big\|\Big\|\big[I_d+(\hat\Gamma-\tilde\Gamma)\tilde\Gamma^{-1}\big]^{-1}\Big\| \\[.2cm]
&\leq\frac{\big\|\tilde\Gamma-\hat\Gamma\big\|}{\hat\lambda_d^2}\sum_{\ell=0}^\infty\bigg(\frac{\|\tilde\Gamma-\hat\Gamma\|}{\hat\lambda_d}\bigg)^{\ell}\\
&=O_P\bigg(\frac 1n\bigg).
\end{align*}
It has been assumed here that $\lambda_d>0$. If $\lambda_d=0$, then the model has dimension $d'<d$. In this case both estimators will of course be based on at most $d'$ principal components.

Putting together all results, the statement of Theorem \ref{th:equiv_VAR_FAR} is established.

\subsection{Proof of Theorem \ref{th:prederror}}
\label{proof:th3.2}
Using the results and notations of Section~\ref{ss:fpe}, it follows that
\begin{align*}
E\big[\|Y_{n+1}-\hat Y_{n+1}\|^2\big]
&=E\big[\|\boldsymbol{Y}_{n+1}-\hat{\boldsymbol{Y}}_{n+1}\|^2\big]
+\sum_{i>d}\lambda_i.
\end{align*}
Some algebra shows that\
\begin{align*}
\boldsymbol{Y}_{n+1}=\boldsymbol\Psi_1 \boldsymbol{Y}_{n}+\cdots+\boldsymbol\Psi_p\boldsymbol{Y}_{n-p+1}+\boldsymbol{E}_n,
\end{align*}
where the $d\times d$ matrices $\boldsymbol\Psi_j$ have entry $\langle\Psi_j(v_{\ell^\prime}),v_\ell\rangle$ in the $\ell^\prime$th column and $\ell$th row, and $\boldsymbol{E}_n=\boldsymbol{T}_n+\boldsymbol{S}_n$ with $d$-variate vectors 
$\boldsymbol{T}_n$ and $\boldsymbol{S}_n$ taking the respective values $\sum_{j=1}^p\sum_{\ell^\prime=d+1}^\infty y_{n+1-j,\ell^\prime}\langle \Psi_j(v_{\ell^\prime}),v_{\ell}\rangle$ and $\langle\varepsilon_{n+1},v_\ell\rangle$ in the $\ell$th coordinate. 

The best linear predictor $\hat{\boldsymbol{Y}}_{n+1}$ of $\boldsymbol{Y}_{n+1}$ based on $\boldsymbol{Y}_{1},\ldots,\boldsymbol{Y}_{n}$ satisfies
\[
E\big[\|\boldsymbol{Y}_{n+1}-\hat{\boldsymbol{Y}}_{n+1}\|^2\big]
\leq E\big[\|\boldsymbol{Y}_{n+1}-(\boldsymbol\Psi_1 \boldsymbol{Y}_{n}+\cdots+\boldsymbol\Psi_p\boldsymbol{Y}_{n-p+1})\|^2\big]
=E\big[\|\boldsymbol{S}_n\|^2\big]+E\big[\|\boldsymbol{T}_n\|^2\big].
\]
The last equality comes from the fact that, due to causality, the components in $\boldsymbol{S}_n$ and in $\boldsymbol{T}_n$ are uncorrelated. Observe next that, by Bessel's inequality,
$
E[\|\boldsymbol{S}_n\|^2]
=\sum_{\ell=1}^dE[\langle \varepsilon_{n+1},v_\ell\rangle^2]\leq\sigma^2.
$
It remains to bound $E[\|\boldsymbol{T}_n\|^2]$. For this term, it holds
\begin{align*}
E\big[\|\boldsymbol{T}_n\|^2\big]
&=E\Bigg[\sum_{\ell=1}^d\bigg(\sum_{j=1}^p\sum_{\ell^\prime=d+1}^\infty y_{n+1-j,\ell^\prime}\langle \Psi_j(v_{\ell^\prime}),v_\ell\rangle\bigg)^2\Bigg] \\
&\leq E\Bigg[\sum_{d=1}^\infty\bigg\langle\sum_{j=1}^{p}\sum_{\ell^\prime=d+1}^\infty y_{n+1-j,\ell^\prime} \Psi_j(v_{\ell^\prime}),v_\ell\bigg\rangle^2\Bigg] \\
&=E\Bigg[\bigg\|\sum_{j=1}^p\sum_{\ell^\prime=d+1}^\infty y_{n+1-j,\ell^\prime} \Psi_j(v_{\ell^\prime})\bigg\|^2\Bigg],
\end{align*}
where Parseval's identity was applied in the final step. Repeatedly using the Cauchy-Schwarz inequality, the last expectation can be estimated as
\begin{align*}
\sum_{j,j^\prime=1}^p&\sum_{\ell,\ell^\prime=d+1}^\infty 
E\big[y_{n+1-j,\ell}\,y_{n+1-j^\prime,\ell^\prime}\big]
\big\langle \Psi_{j}(v_{\ell}),\Psi_{j^\prime}(v_{\ell^\prime})\big\rangle \\
&\leq\sum_{j,j^\prime=1}^p
\bigg(\sum_{\ell=d+1}^\infty\sqrt{\lambda_{\ell}}\| \Psi_{j}(v_{\ell})\|\bigg)
\bigg(\sum_{\ell^\prime=d+1}^\infty\sqrt{\lambda_{\ell^\prime}}\| \Psi_{j^\prime}(v_{\ell^\prime})\|\bigg)\\
&\leq\sum_{\ell''=d+1}^\infty\lambda_{\ell''}\sum_{j,j^\prime=1}^p
\bigg(\sum_{\ell=d+1}^\infty\|\Psi_{j}(v_{\ell})\|^2\bigg)^{1/2}
\bigg(\sum_{\ell^\prime=d+1}^\infty\|\Psi_{j^\prime}(v_{\ell^\prime})\|^2\bigg)^{1/2} \\
&=\sum_{\ell=d+1}^\infty\lambda_\ell\bigg(\sum_{j=1}^p\bigg[\sum_{\ell=d+1}^\infty\|\Psi_{j}(v_{\ell})\|^2\bigg]^{1/2}\bigg)^2.
\end{align*}
Collecting all estimates finishes the proof.

\section*{Acknowledgment} 
Alexander Aue is Associate Professor, Department of Statistics, University of California, Davis, CA 95616 (E-mail: {\tt aaue@ucdavis.edu}). 
Diogo Dubart Norinho is graduate student, Department of Computer Science, University College London, London WC1E 6BT, UK (E-mail: {\tt ucabdub@ucl.ac.uk}.
Siegfried H\"ormann is Charg\'e de cours, Department of Mathematics, Universit\'e libre de Bruxelles, B-1050 Brussels, Belgium (E-mail: {\tt shormann@ulb.ac.be}).  
The authors are grateful to the editor, the associate editor and the reviewers for constructive comments and support. 
This research was partially supported by NSF grants DMS 0905400, DMS 1209226 and DMS 1305858, Communaut\'e fran\c{c}aise de Belgique---Actions de Recherche Concert\'ees (2010--2015) and the IAP research network grant nr.\ P7/06 of the Belgian government (Belgian Science Policy).
Part of this work was completed during a Research in Pairs stay (Aue and H\"ormann) at the Mathematical Research Institute Oberwolfach.

\vspace{.3cm}
\renewcommand{\baselinestretch}{.1}

\end{document}